\newcommand{\bse}{\begin{subequations}}
\newcommand{\ese}{\end{subequations}}
\newtheorem{theorem}{Theorem}[section]
\newtheorem{definition}[theorem]{Definition}
\newtheorem{lemma}[theorem]{Lemma}
\newtheorem{remark}[theorem]{Remark}
\newtheorem{proposition}[theorem]{Proposition}
\numberwithin{equation}{section}
\DeclareMathOperator{\tr}{tr}
\DeclareMathOperator*{\pf}{pf}
\def\undertilde#1{\mathord{\vtop{\ialign{##\crcr
$\hfil\displaystyle{#1}\hfil$\crcr\noalign{\kern1.5pt\nointerlineskip}
$\hfil\tilde{}\hfil$\crcr\noalign{\kern-6.5pt}}}}}
\def\underhat#1{\mathord{\vtop{\ialign{##\crcr
$\hfil\displaystyle{#1}\hfil$\crcr\noalign{\kern1.5pt\nointerlineskip}
$\hfil\hat{}\hfil$\crcr\noalign{\kern-6.5pt}}}}}
\def\underbar#1{\mathord{\vtop{\ialign{##\crcr
$\hfil\displaystyle{#1}\hfil$\crcr\noalign{\kern1.5pt\nointerlineskip}
$\hfil\bar{}\hfil$\crcr\noalign{\kern-6.5pt}}}}}
\def\undercheck#1{\mathord{\vtop{\ialign{##\crcr
$\hfil\displaystyle{#1}\hfil$\crcr\noalign{\kern1.5pt\nointerlineskip}
$\hfil\check{}\hfil$\crcr\noalign{\kern-6.5pt}}}}}
\newcommand{\rd}{\mathrm{d}}
\newcommand{\re}{\mathrm{e}}
\newcommand{\rD}{\mathrm{D}}
\newcommand{\cU}{\mathcal{U}}
\newcommand{\cV}{\mathcal{V}}
\newcommand{\cW}{\mathcal{W}}
\newcommand{\bc}{\boldsymbol{c}}
\newcommand{\tbc}{{}^{t\!}\boldsymbol{c}}
\newcommand{\bu}{\boldsymbol{u}}
\newcommand{\ba}{\boldsymbol{a}}
\newcommand{\tba}{{}^{t\!}\boldsymbol{a}}
\newcommand{\tbb}{{}^{t\!}\boldsymbol{b}}
\newcommand{\be}{\boldsymbol{e}}
\newcommand{\tbe}{{}^{t\!}\boldsymbol{e}}
\newcommand{\bk}{\boldsymbol{k}}
\newcommand{\tbk}{{}^{t\!}\boldsymbol{k}}
\newcommand{\bx}{\boldsymbol{x}}
\newcommand{\bA}{\boldsymbol{A}}
\newcommand{\bB}{\boldsymbol{B}}
\newcommand{\bC}{\boldsymbol{C}}
\newcommand{\tbC}{{}^{t\!}\boldsymbol{C}}
\newcommand{\bI}{\boldsymbol{I}}
\newcommand{\bL}{\boldsymbol{L}}
\newcommand{\tbL}{{}^{t\!}\boldsymbol{L}}
\newcommand{\bM}{\boldsymbol{M}}
\newcommand{\bO}{\boldsymbol{O}}
\newcommand{\bP}{\boldsymbol{P}}
\newcommand{\bQ}{\boldsymbol{Q}}
\newcommand{\bR}{\boldsymbol{R}}
\newcommand{\bU}{\boldsymbol{U}}
\newcommand{\tbU}{{}^{t\!}\boldsymbol{U}}
\newcommand{\bLd}{\mathbf{\Lambda}}
\newcommand{\tbLd}{{}^{t\!}\boldsymbol{\Lambda}}
\newcommand{\bOg}{\boldsymbol{\Omega}}
\newcommand{\tbOg}{{}^{t\!}\boldsymbol{\Omega}}
\newcommand{\bphi}{\boldsymbol{\phi}}
\newcommand{\kp}{\kappa}
\newcommand{\ld}{\lambda}
\newcommand{\og}{\omega}
\newcommand{\sg}{\sigma}
\newcommand{\Ld}{\Lambda}
\newcommand{\tLd}{{}^{t\!}\Lambda}
\newcommand{\Og}{\Omega}
\newcommand{\bblu}{\begin{color}{blue}}
\newcommand{\bred}{\begin{color}{red}}
\newcommand{\ecl}{\end{color}}
\acrodef{1D}[1D]{one-dimensional}
\acrodef{2D}[2D]{two-dimensional}
\acrodef{2DTL}[2DTL]{two-dimensional Toda lattice}
\acrodef{3D}[3D]{three-dimensional}
\acrodef{ABS}[ABS]{Adler--Bobenko--Suris}
\acrodef{bDT}[bDT]{binary Darboux transform}
\acrodef{BT}[BT]{B\"acklund transform}
\acrodef{BSQ}[BSQ]{Boussinesq}
\acrodef{CAC}[CAC]{consistency-around-the-cube}
\acrodef{DT}[DT]{Darboux transform}
\acrodef{DL}[DL]{direct linearisation}
\acrodef{DLT}[DLT]{direct linearising transform}
\acrodef{DS}[DS]{Drinfel'd--Sokolov}
\acrodef{DDeltaE}[D$\Delta$E]{differential-difference equation}
\acrodef{FG}[FG]{Fordy--Gibbons}
\acrodef{FX}[FX]{Fordy--Xenitidis}
\acrodef{GD}[GD]{Gel'fand--Dikii}
\acrodef{HM}[HM]{Hirota--Miwa}
\acrodef{HS}[HS]{Hirota--Satsuma}
\acrodef{KP}[KP]{Kadomtsev--Petviashvili}
\acrodef{KK}[KK]{Kaup--Kupershmidt}
\acrodef{KN}[KN]{Krichever--Novikov}
\acrodef{KdV}[KdV]{Korteweg--de Vries}
\acrodef{LL}[LL]{Landau--Lifshitz}
\acrodef{MDC}[MDC]{multi-dimensional consistency}
\acrodef{NLS}[NLS]{nonlinear Schr\"odinger}
\acrodef{NQC}[NQC]{Nijhoff--Quispel--Capel}
\acrodef{ODE}[ODE]{ordinary differential equation}
\acrodef{ODeltaE}[O$\Delta$E]{ordinary difference equation}
\acrodef{PDE}[PDE]{partial differential equation}
\acrodef{PDeltaE}[P$\Delta$E]{partial difference equation}
\acrodef{RHP}[RHP]{Riemann--Hilbert problem}
\acrodef{SK}[SK]{Sawada--Kotera}
\acrodef{sG}[sG]{sine--Gordon}
\acrodef{YB}[YB]{Yang--Baxter}
\title[On a coupled Kadomtsev--Petviashvili system associated with an elliptic curve]{On a coupled Kadomtsev--Petviashvili system \\ associated with an elliptic curve}
\author{Wei Fu}
\address[WF]{School of Mathematical Sciences and Shanghai Key Laboratory of Pure Mathematics and Mathematical Practice \\
East China Normal University \\ Shanghai 200241 \\ People's Republic of China}
\author{Frank W. Nijhoff}
\address[FWN]{School of Mathematics \\ University of Leeds \\ Leeds LS2 9JT \\ United Kingdom}
\begin{document}

\begin{abstract}
The coupled Kadomtsev--Petviashvili system associated with an elliptic curve,
proposed by Date, Jimbo and Miwa [J. Phys. Soc. Jpn., 52:766--771, 1983], is reinvestigated within the direct linearisation framework,
which provides us with more insights into the integrability of this elliptic model from the perspective of a general linear integral equation.
As a result, we successfully construct for the elliptic coupled Kadomtsev--Petviashvili system
not only a Lax pair composed of differential operators in $2\times2$ matrix form but also multi-soliton solutions with phases parametrised by points on the elliptic curve.
Dimensional reductions based on the direct linearisation, to the elliptic coupled Korteweg-de Vries and Boussinesq systems, are also discussed.
In addition, a novel class of solutions are obtained for the $D_\infty$-type Kadomtsev--Petviashvili equation with nonzero constant background as a byproduct.
\end{abstract}

\keywords{elliptic coupled KP, DKP, direct linearisation, dimensional reduction, Lax pair, $\tau$-function, soliton, nonzero constant background}

\maketitle

\section{Introduction}\label{S:Intro}

It is well-known that integrable systems often come in three different classes comprising rational, trigonometric/hyperbolic, and elliptic models.
Elliptic models are described by equations where there are parameters which are essentially moduli of elliptic curves,
or where the dependent variable appears in the argument of elliptic functions.
The recent history of the subject has taught us the remarkable finding that the theory of integrable systems is intimately linked to that of elliptic functions and curves.
One aspect of this connection is the fact that, as far as we know, the richest class of integrable systems are the ones associated with those curves.
For example, the Adler equation (i.e. the discrete Krichever--Novikov equation) \cite{Adl98} acts as the master equation among first-order partial difference equations; the elliptic Painlev\'e equation is on the top in Sakai's classification \cite{Sak01} of the discrete Painlev\'e equations.

Here we focus on a three-component integrable partial differential system given by
\bse\label{DJM}
\begin{align}
&\partial_3u=\frac{1}{4}\partial_1^3u+\frac{3}{2}(\partial_1u)^2+\frac{3}{4}\partial^{-1}_1\partial_2^2u+3g(1-vw), \\
&\partial_3v=-\frac{1}{2}\partial_1^3v-3(\partial_1u-3e)\partial_1v+\frac{3}{2}\partial_1\partial_2v+3(\partial_2u)v, \\
&\partial_3w=-\frac{1}{2}\partial_1^3w-3(\partial_1u-3e)\partial_1w-\frac{3}{2}\partial_1\partial_2w-3(\partial_2u)w,
\end{align}
\ese
where the solutions $u$, $v$ and $w$ are functions depending on variables $x_1$, $x_2$ and $x_3$.
The constant parameters $e$ and $g$ are moduli of an elliptic curve, given below in \eqref{Curve},
and $\partial_j$ denotes the partial-differential operator
\begin{align*}
\partial_j[\,\cdot\,]\doteq\frac{\partial}{\partial x_j}[\,\cdot\,],
\end{align*}
and $\partial_j^{-1}$ is a pseudo-differential operator which can be interpreted as an integral
\begin{align*}
\partial_{j}^{-1}[\,\cdot\,]\doteq\int^{x_j}[\,\cdot\,]\rd x_j
\end{align*}
in a standard way, see \cite{Dic03}.
Equation \eqref{DJM} is an alternative presentation of an elliptic \ac{KP}-type system
that was originally proposed by Date, Jimbo and Miwa, the form of which actually suggests a $(3+1)$-dimensional differential-difference system, see section 3 of \cite{DJM5}.
It was also pointed out by those authors that such a system follows from a similar construction of solutions
of the fully anisotropic \ac{LL} equation from a bilinear perspective as in \cite{DJKM83}.
Thus, one may infer that \eqref{DJM} in a sense is a \ac{KP} (higher-dimensional) analogue of the \ac{LL} equation.
Furthermore, from the form \eqref{DJM} of the elliptic \ac{KP} system we can conclude that
there is a connection (see section \ref{S:DKP} below) with the coupled \ac{KP} system
\bse\label{DKP}
\begin{align}
&\partial_t\cU=\frac{1}{4}\partial_x^3\cU+\frac{3}{2}\cU\partial_x\cU+\frac{3}{4}\partial^{-1}_x\partial_y^2\cU-6\partial_x(\cV\cW), \\
&\partial_t\cV=-\frac{1}{2}\partial_x^3\cV-\frac{3}{2}\cU\partial_x\cV+\frac{3}{2}\partial_x\partial_y\cV+\frac{3}{2}(\partial_x^{-1}\partial_y\cU)\cV, \\
&\partial_t\cW=-\frac{1}{2}\partial_x^3\cW-\frac{3}{2}\cU\partial_x\cW-\frac{3}{2}\partial_x\partial_y\cW-\frac{3}{2}(\partial_x^{-1}\partial_y\cU)\cW
\end{align}
\ese
proposed by Hirota and Ohta (cf. \cite{HO91} and also formula $(3.94)$ in \cite{Hir04}),
where the solutions $\cU$, $\cV$ and $\cW$ are functions of the independent variables $x$, $y$ and $t$,
with partial derivatives $\partial_x$, $\partial_y$ and $\partial_t$ respectively,
and where $\partial_x^{-1}$, defined in a similar way as above, is the pseudo-differential operator with respect to $x$.
Since \eqref{DKP} is one of the members in the DKP hierarchy (a \ac{KP}-type hierarchy associated with the infinite-dimensional Lie algebra $D_\infty$, see \cite{JM83})
which possesses a rich integrable structure in the theory of integrable systems,
we believe the understanding of elliptic model \eqref{DJM} will certainly yield additional insights into the integrability of many other nonlinear systems.
However, this remarkable elliptic integrable system has attracted little attention in the literature since the paper \cite{DJM5}, as far as we are aware.
For this reason, we believe that the elliptic coupled \ac{KP} system \eqref{DJM} deserves reinvestigation to explore further its integrability.

In the present paper, we shall adopt the \ac{DL} method to study the elliptic coupled \ac{KP} equation \eqref{DJM}.
Originally proposed by Fokas, Ablowitz and Santini for constructing a large class of solutions of nonlinear integrable partial differential equations,
the method which is based on formal singular linear integral equations \cite{FA81,FA83,SAF84},
was subsequently developed into a comprehensive framework to construct (discrete and continuous) integrable systems
and study their underlying algebraic structures, see e.g. \cite{NQLC83,Nij88},
and \cite{NQC83,QNCL84} for constructions of integrable discretisation of nonlinear partial differential equations,
and \cite{NCWQ84,NCW85,Nij85a,Nij85b,NC90} for the treatment of three-dimensional equations of \ac{KP}-type.
In \cite{Fu17a,Fu18b,Fu21a} the connection between the \ac{DL} and integrable systems on Lie algebras was developed.
A powerful tool emerging from \ac{DL}, first developed in \cite{NQLC83}, was an infinite matrix structure in the space of the spectral variable.
The associated infinite matrix representation of the linear integral equation allows us to turn the \ac{DL} into an algebraic method
which has been very effective in forging an understanding of the underlying integrability of the nonlinear integrable systems and their interconnections.
In \cite{NP03} and \cite{JN14}, the notion of elliptic infinite matrix (effectively amounting to an index-relabelling system of infinite matrices) was introduced.
This allows us to study integrable equations associated with elliptic curves within the \ac{DL} framework.

By reparametrising the time evolution and the Cauchy kernel for the \ac{LL} equation in the fermionic construction \cite{DJKM83}
and simultaneously considering a linear integral equation with a skew-symmetric integration measure
which was introduced in \cite{Fu17a,Fu18b,Fu21a} for the BKP-type equations,
we establish in this paper the \ac{DL} scheme for the elliptic coupled \ac{KP} system \eqref{DJM}.
This allows us to study the integrability of the elliptic coupled \ac{KP} system from a unified perspective.
As a result, we successfully derive the nonlinear equation \eqref{DJM} together with a suitable Lax pair from the elliptic infinite matrix structure.
Meanwhile, we construct the elliptic soliton solutions of \eqref{DJM} in terms of the $\tau$-function, which possesses a Pfaffian structure,
using a Pfaffian version of the well-known Laplace-type expansion formula (see \cite{Oka19} and also appendix \ref{S:Pfaff}),
as well as a Pfaffian analogue of the famous Frobenius formula for the determinants of elliptic Cauchy matrices (see \cite{DJKM83} and also appendix \ref{S:Cauchy}).
These results underpin not only the integrability of the elliptic coupled \ac{KP} system from the viewpoint of solvability,
but also induce a new class of solutions of the DKP equation with nonzero constant background as a byproduct.
In addition, we discuss dimensional reductions of \eqref{DJM},
from which we obtain the elliptic coupled \ac{KdV} and \ac{BSQ} systems together with their respective Lax pairs.

The paper is organised as follows.
In section \ref{S:InfMat}, we introduce the fundamental objects that will be used in construction of elliptic integrable systems,
including the notion of elliptic index-raising matrices and elliptic index labels.
The \ac{DL} scheme of the elliptic coupled \ac{KP} system is established in section \ref{S:DL} in the language of infinite matrices.
Section \ref{S:EllKP} is concerned with the construction of the elliptic coupled \ac{KP} system \eqref{DJM} and its Lax pair.
In the subsequent section \ref{S:Reduc}, we discuss dimensional reductions to the elliptic coupled \ac{KdV} and \ac{BSQ} systems.
The formulae of the elliptic soliton solutions to \eqref{DJM} are presented in section \ref{S:Sol}.
Finally, we explain in section \ref{S:DKP} how the soliton solutions to \eqref{DJM} generate those to the DKP equation with nonzero constant background.

\section{Infinite matrices and elliptic index labels}\label{S:InfMat}

We present an introduction to the fundamental objects that are needed in this paper, including the elliptic curve, infinite matrices, elliptic index-raising operators, etc.
These objects were introduced in \cite{NP03,JN14} for the DL construction of the so-called discrete and continuous elliptic \ac{KdV} and \ac{KP} equations.

The elliptic curve that we consider in this paper is of the form
\begin{align}\label{Curve}
k^2=K+3e+\frac{g}{K},
\end{align}
in which
\begin{align}\label{Moduli}
e\doteq e_1 \quad \hbox{and} \quad g\doteq(e_1-e_2)(e_1-e_3)
\end{align}
are the moduli of the curve, for $e_1$, $e_2$ and $e_3$ being the branch points of the standard Weierstrass elliptic curve $z^2=4(Z-e_1)(Z-e_2)(Z-e_3)$.
The elliptic curve \eqref{Curve} is parametrised by a uniformising variable $\kp$ through the coordinates
\begin{align}\label{Coordinates}
k=\frac{1}{2}\frac{\wp'(\kp)}{\wp(\kp)-e} \quad \hbox{and} \quad K=\wp(\kp)-e,
\end{align}
where $\wp$ and $\wp'$ denote the standard Weierstrass elliptic function and its first-order derivative, respectively.

We consider infinite matrices taking the form of $\bU=\left(U_{i,j}\right)_{\infty\times\infty}$
and infinite column and row vectors $\ba=(a_i)_{\infty\times 1}$ and $\tba=(a_i)_{1\times\infty}$.
We adopt the notations ${}^{t\!}(\cdot)$ for the transpose, $(\cdot)^{(i,j)}$ for the $(i,j)$-entry of an infinite matrix, and $(\cdot)^{(i)}$ for the $i$th-component of an infinite vector.
\begin{definition}
The index-raising infinite matrix $\Ld$ and its transpose $\tLd$ are defined by their respective $(i,j)$-entries
\begin{align}\label{Lambda}
\Ld^{(i,j)}\doteq\delta_{i+1,j} \quad \hbox{and} \quad \tLd^{(i,j)}\doteq\delta_{i,j+1}, \quad \forall i,j\in\mathbb{Z},
\end{align}
where $\delta_{\cdot,\cdot}$ is the standard Kronecker $\delta$-function defined as
\begin{align*}
\delta_{i,j}=
\left\{
\begin{array}{ll}
1, & i=j, \\
0, & i\neq j.
\end{array}
\right.
\end{align*}
\end{definition}
\begin{remark}
The infinite matrices $\Ld$ and $\tLd$ are entitled index-raising matrices because of the identities
\begin{align*}
(\Ld\bU)^{(i,j)}=\bU^{(i+1,j)} \quad \hbox{and} \quad (\bU\tLd)^{(i,j)}=\bU^{(i,j+1)}, \quad \forall i,j\in\mathbb{Z};
\end{align*}
in other words, the operation of $\Ld$ (resp. $\tLd$) from the left (resp. right) raises all the row (resp. column) indices of an infinite matrix by $1$.
Similarly, we have the identities
\begin{align*}
(\Ld\ba)^{(i)}=\ba^{(i+1)} \quad \hbox{and} \quad (\tba\tLd)^{(i)}=\tba^{(i+1)}, \quad \forall i\in\mathbb{Z},
\end{align*}
for infinite column and row vectors.
\end{remark}

\begin{definition}
The infinite projection matrix $\bO$ is defined by its $(i,j)$-entries
\begin{align*}
\bO^{(i,j)}\doteq\delta_{i,0}\delta_{0,j}, \quad \forall i,j\in\mathbb{Z}.
\end{align*}
\end{definition}
\begin{remark}
It is easily verified that the multiplication between $\bU$ and $\bO$ results in identities
\begin{align}\label{ProjectEntry}
(\bO\bU)^{(i,j)}=\delta_{i,0}\bU^{(0,j)} \quad \hbox{and} \quad (\bU\bO)^{(i,j)}=\bU^{(i,0)}\delta_{0,j}, \quad \forall i,j\in\mathbb{Z}.
\end{align}
This implies that the projection infinite matrix $\bO$ plays the role of mapping arbitrary $\bU$ to an infinite matrix of rank one.
Likewise, we have for arbitrary $\ba$ the identities
\begin{align}\label{ProjectComponent}
(\bO\ba)^{(i)}=\delta_{i,0}\ba^{(0)} \quad \hbox{and} \quad (\tba\bO)^{(i)}=\tba^{(0)}\delta_{0,i}, \quad \forall i\in\mathbb{Z}.
\end{align}
\end{remark}

\begin{remark}
Although \eqref{Lambda} suggests that the infinite matrices $\Ld$ and $\tLd$ are each other's inverse,
their role in the structure is such that they are never multiplied together,
and appear as distinct symbols separated by the projection matrix $\bO$ in the algebraic structure of infinite matrices.
\end{remark}

\begin{definition}
We define a special particular unit column vector $\be$ and its transpose $\tbe$ by their respective components
\begin{align*}
\be^{(i)}=\tbe^{(i)}=\delta_{i,0}, \quad \forall i\in\mathbb{Z}.
\end{align*}
\end{definition}

\begin{remark}
The infinite projection matrix $\bO$ can be written as the multiplication of $\be$ and $\tbe$, namely $\bO=\be\,\tbe$.
It is easily verified that $\be$ and $\tbe$ possess the properties
\begin{align}\label{UnitElement}
(\tbe\,\bU)^{(i)}=\bU^{(0,i)}, \quad (\bU\be)^{(i)}=\bU^{(i,0)} \quad \hbox{and} \quad \tbe\,\bU\be=\bU^{(0,0)}, \quad \forall i\in\mathbb{Z}
\end{align}
for an arbitrary infinite matrix $\bU$ as well as
\begin{align*}
\tbe\,\ba=\ba^{(0)} \quad \hbox{and} \quad \tba\,\be=\tba^{(0)}
\end{align*}
for arbitrary infinite column and row vectors $\ba$ and $\tba$.
\end{remark}

To deal with the elliptic integrable systems, we introduce the notion of elliptic index-raising matrices for future convenience.
\begin{definition}
The elliptic index-raising operators $\bLd$ and $\bL$ are defined as
\begin{align*}
\bLd\doteq\frac{1}{2}\frac{\wp'(\Ld)}{\wp(\Ld)-e} \quad {and} \quad \bL\doteq\wp(\Ld)-e,
\end{align*}
respectively; similarly, their respective transposes are defined by
\begin{align*}
\tbLd\doteq\frac{1}{2}\frac{\wp'(\tLd)}{\wp(\tLd)-e} \quad {and} \quad \tbL\doteq\wp(\tLd)-e.
\end{align*}
These elliptic index-raising operators should be understood as formal series expansions of $\Ld$ and $\tLd$, respectively.
\end{definition}

\begin{remark}
The elliptic index-raising operators $\bLd$, $\bL$ and $\tbLd$, $\tbL$ obey the elliptic curve relations
\begin{align}\label{EllIndex}
\bLd^2=\bL+3e+\frac{g}{\bL} \quad \hbox{and} \quad \tbLd^2=\tbL+3e+\frac{g}{\tbL},
\end{align}
respectively, as consequences of formulae \eqref{Curve} and \eqref{Coordinates}.
Here, for notational convenience when there is no issue of matrix ordering,
we have used the notations $\frac{1}{\bL}$ and $\frac{1}{\tbL}$ to denote the inverses of $\bL$ and $\tbL$, respectively.
\end{remark}

\begin{remark}
The formal operators $\bLd$, $\tbLd$, and $\bL$ and $\tbL$ should be understood in the following way.
Since $\bLd$ and $\bL$ commute, and similarly, $\tbLd$ and $\tbL$ commute,
we can consider a joint set of formal eigenvectors $\bc$ and $\tbc$ respectively, obeying the eigenvalue equations
\begin{align}\label{c}
\bLd\bc(\kp)=k\bc(\kp), \quad \bL\bc(\kp)=K\bc(\kp) \quad \hbox{and} \quad \tbc(\kp')\tbLd=k'\bc(\kp), \quad \tbc(\kp)\tbL=K'\bc(\kp),
\end{align}
where $(k,K)$ and $(k',K')$ are the points on the elliptic curve \eqref{Curve},
parametrised by their respective uniformising spectral parameters $\kp$ and $\kp'$ for $\kp,\kp'\in\mathbb{C}$, namely
we have relations
\begin{align*}
k=\frac{1}{2}\frac{\wp'(\kp)}{\wp(\kp)-e}, \quad K=\wp(\kp)-e, \quad k'=\frac{1}{2}\frac{\wp'(\kp')}{\wp(\kp')-e} \quad \hbox{and} \quad K'=\wp(\kp')-e.
\end{align*}
Thus, we can think of the points $(k,K)$ and $(k',K')$ on the curve as the `symbols' of these operators,
in some representation defined by the basis of infinite vectors $\bc(\kp)=(\kp^i)_{\infty\times 1}$ and $\tbc(\kp')=(\kp'^i)_{1\times\infty}$ respectively,
composed of the monomials of uniformising spectral parameters $\kp$ and $\kp'$.
Furthermore, these variables will play the role of the spectral parameters for the elliptic coupled \ac{KP} system.
\end{remark}

In the sections below, we shall only deal with the elliptic index-raising operations.
For this reason, we introduce elliptic index labels $[i,j]$ and $[i]$ for infinite matrices and infinite vectors, respectively,
compared with the above non-elliptic index labels $(i,j)$ and $(i)$.
\begin{definition}
In the elliptic index-labelling system,  the $[i,j]$-entries of the infinite matrix $\bU$ are defined by the following:
\bse\label{EllEntry}
\begin{align}
&\bU^{[2i,2j]}\doteq(\bL^i\bU\tbL^j)^{(0,0)}, \quad \bU^{[2i+1,2j+1]}\doteq(\bLd\bL^i\bU\tbL^j\tbLd)^{(0,0)}, \\
&\bU^{[2i+1,2j]}\doteq(\bLd\bL^i\bU\tbL^j)^{(0,0)}, \quad \bU^{[2i,2j+1]}\doteq(\bL^i\bU\tbL^j\tbLd)^{(0,0)},
\end{align}
\ese
for all $i,j\in\mathbb{Z}$. Likewise, we define the $[i]$th components of the infinite vectors $\ba$ and $\tba$ as follows:
\bse\label{EllComponent}
\begin{align}
&\ba^{[2i]}\doteq(\bL^i\ba)^{(0,0)}, \quad \ba^{[2i+1]}\doteq(\bLd\bL^i\ba)^{(0,0)}, \\
&\tba^{[2i]}\doteq(\tba\tbL^i)^{(0)}, \quad \tba^{[2i+1]}\doteq(\tba\tbL^i\tbLd)^{(0)},
\end{align}
\ese
for arbitrary $i\in\mathbb{Z}$.
\end{definition}
In the elliptic index-labelling system, $\bLd$ and $\tbLd$ (resp. $\bL$ and $\tbL$) play roles of order $1$ (resp. order $2$) index-raising operators.
We also comment that in concrete calculation, sometimes the curve relations in \eqref{EllIndex} are useful to reduce the powers of $\bLd$ and $\tbLd$.
For example, we have
\begin{align*}
(\bLd^2\bU)^{(0,0)}=\bU^{[2,0]}+3e\bU^{[0,0]}+g\bU^{[-2,0]} \quad \hbox{and} \quad (\bLd^3\bU)^{(0,0)}=\bU^{[3,0]}+3e\bU^{[1,0]}+g\bU^{[-1,0]},
\end{align*}
because of $\bLd^2\bU=\left(\bL+3e+g/\bL\right)\bU$ and $\bLd^3\bU=\bLd\bLd^2\bU=\bLd\left(\bL+3e+g/\bL\right)\bU$.

\section{Infinite matrix representation of the elliptic coupled KP system}\label{S:DL}

The essential ingredients in the construction of the elliptic coupled \ac{KP} system are the plane wave factor,
which defines the dynamics of the system in terms of the independent variables $x_j$, given by
\begin{align}\label{PWF}
\rho_n(\kp)\doteq\exp\left\{\sum_{j=0}^\infty k^{2j+1}x_{2j+1}+\sum_{j=1}^\infty\left[K^j-\left(\frac{g}{K}\right)^j\right]x_{2j}\right\}\left(\frac{K}{\sqrt{g}}\right)^n,
\end{align}
and the skew-symmetric Cauchy kernel
\begin{align}\label{Kernel}
\Og(\kp,\kp')\doteq\frac{K-K'}{k+k'}=\frac{k-k'}{1-\displaystyle\frac{g}{KK'}}.
\end{align}
The plane wave factor \eqref{PWF} and the kernel \eqref{Kernel} are reparametrisation of those objects first presented in \cite{DJKM83}
in the different context of a fermionic construction of the \ac{LL} equation.
\begin{definition}
For the given plane wave factor \eqref{PWF} and the Cauchy kernel \eqref{Kernel}, the linear integral equation of the elliptic coupled \ac{KP} system takes the form of
\begin{align}\label{Integral}
\bu_n(\kp)+\iint_{D}\rd\zeta(\ld,\ld')\rho_n(\kp)\Og(\kp,\ld')\rho_n(\ld')\bu_n(\ld)=\rho_n(\kp)\bc(\kp),
\end{align}
where the infinite column vector $\bu_n(\kp)$ is the wave function
whose components depend on the independent variables $x_j$ for $j\in\mathbb{Z}^+$ and the spectral parameter $\kp$,
and the integration measure $\rd\zeta$ and the integration domain $D$ must obey the antisymmetry property
\begin{align}\label{Measure}
\rd\zeta(\kp,\kp')=-\rd\zeta(\kp',\kp), \quad \forall (\kp,\kp')\in D.
\end{align}
\end{definition}
\begin{remark}
At the current stage, we do not specify the form of the integration measure.
The only requirement is that the associated homogeneous integral equation of \eqref{Integral} has only zero solution,
which guarantees the most general solution space from the perspective of the \ac{DL} approach.
\end{remark}
For the sake of construction of integrable systems, we need the infinite matrix $\bC_n$ defined as
\begin{align}\label{C}
\bC_n\doteq\iint_D\rd\zeta(\kp,\kp')\rho_n(\kp)\bc(\kp)\tbc(\kp')\rho_n(\kp'),
\end{align}
in which the integration measure and domain must satisfy the same antisymmetry property \eqref{Measure},
and also the infinite matrix $\bOg$ defined by
\begin{align}\label{Og}
\tbc(\kp')\bOg\bc(\kp)\doteq\Og(\kp,\kp').
\end{align}
From the definitions, it is reasonable to think of $\bC_n$ and $\bOg$ as the infinite matrix representation of the plane wave factor \eqref{PWF} and the kernel \eqref{Kernel}.
Due to the antisymmetry of the integration measure and the kernel,
it is verified that both $\bC_n$ and $\bOg$ are skew-symmetric, i.e. $\tbC_n=-\bC_n$ and $\tbOg=-\bOg$.
The key object towards nonlinear integrable systems in the direct linearisation is a potential matrix.
We give its definition as follows.
\begin{definition}
The potential matrix in the \ac{DL} is a double integral in terms of the spectral parameters, defined as
\begin{align}\label{Potential}
\bU_n\doteq\iint_D\rd\zeta(\kp,\kp')\bu_n(\kp)\tbc(\kp')\rho_n(\kp'),
\end{align}
in which $\bu_n(\kp)$ satisfies the linear integral equation \eqref{Integral},
and the measure is the same as the one for \eqref{Integral}, obeying the antisymmetry property \eqref{Measure}.
\end{definition}
\noindent
By following \eqref{Potential} and \eqref{Og}, we can reformulate the linear integral equation \eqref{Integral} as
\begin{align}\label{u}
\bu_n(\kp)=(1-\bU_n\bOg)\rho_n(\kp)\bc(\kp).
\end{align}
Performing the operation $\iint_D\rd\zeta(\kp,\kp')\eqref{u}\tbc(\kp')\rho_n(\kp')$, we can further derive
\begin{align}\label{U}
\bU_n=(1-\bU_n\bOg)\bC_n, \quad \hbox{or alternatively}, \quad \bU_n=\bC_n(1+\bOg\bC_n)^{-1}.
\end{align}
Equation \eqref{U} in some sense can be considered as the infinite matrix version of \eqref{u}.

We now introduce the $\tau$-function associated with the elliptic coupled \ac{KP} system.
\begin{definition}
The $\tau$-function is formally defined by
\begin{align}\label{tau}
\tau_n^2\doteq\det(1+\bOg\bC_n),
\end{align}
where $\bOg$ and $\bC_n$ are given by \eqref{Og} and \eqref{C}, respectively.
The determinant should be understood as the formal expansion
\begin{align*}
\det(1+\bOg\bC_n)=1+\sum_{i}(\bOg\bC_n)^{(i,i)}+\sum_{i<j}
\left|
\begin{matrix}
(\bOg\bC_n)^{(i,i)} & (\bOg\bC_n)^{(i,j)} \\
(\bOg\bC_n)^{(j,i)} & (\bOg\bC_n)^{(j,j)}
\end{matrix}
\right|+\cdots.
\end{align*}
We also remark that the determinant satisfies the identity $\ln[\det(1+\bOg\bC_n)]=\tr[\ln(1+\bOg\bC_n)]$.
\end{definition}

Our aim is to derive the dynamical relations of $\bU_n$ in terms of the continuous variables $x_j$ and the discrete variable $n$.
To realise this, we first investigate the evolutions of $\bC_n$.
Observing the identity \eqref{c} and the form of the plane wave factor \eqref{PWF}, we obtain the following dynamical relations:
\bse\label{CDyn}
\begin{align}
&\partial_{2j+1}\bC_n=\bLd^{2j+1}\bC_n+\bC_n\tbLd^{2j+1}, \label{CDyna} \\
&\partial_{2j}\bC_n=\left[\bL^j-\left(\frac{g}{\bL}\right)^j\right]\bC_n+\bC_n\left[\tbL^j-\left(\frac{g}{\tbL}\right)^j\right], \label{CDynb} \\
&\bC_{n+1}\frac{\sqrt{g}}{\tbL}=\frac{\bL}{\sqrt{g}}\bC_n, \label{CDync}
\end{align}
\ese
by differentiating $\bC_n$ with respect to $x_j$ and shifting $\bC_n$ with respect to $n$.
Next, from equations \eqref{Kernel} and \eqref{Og} we are able to derive
\begin{align}\label{OgAlg}
\bOg\bLd+\tbLd\bOg=\bO\bL-\tbL\bO \quad \hbox{and} \quad \bOg-g\,\tbL^{-1}\bOg\bL^{-1}=\bO\bLd-\tbLd\bO.
\end{align}
In fact, multiplying the two equations in \eqref{OgAlg} respectively by $\tbc(\kp')$ from left and $\bc(\kp)$ from the right yield
\begin{align*}
\Og(\kp,\kp')k+k'\Og(\kp,\kp')=K-K' \quad \hbox{and} \quad \Og(\kp,\kp')-\frac{g}{KK'}\Og(\kp,\kp')=k-k',
\end{align*}
namely \eqref{Kernel}; in other words, equations in \eqref{OgAlg} are nothing but the infinite matrix representation of the elliptic Cauchy kernel.
With the help of \eqref{OgAlg}, we can further derive the following relations for the infinite matrix $\bOg$ by mathematical induction:
\bse\label{OgDyn}
\begin{align}
&\bOg\bLd^{2j+1}+\tbLd^{2j+1}\bOg=\bO_{2j+1}\bL-\tbL\bO_{2j+1}, \label{OgDyna} \\
&\bOg\left[\bL^j-\left(\frac{g}{\bL}\right)^j\right]+\left[\tbL^j-\left(\frac{g}{\tbL}\right)^j\right]\bOg=\bO'_j\bLd-\tbLd\bO'_j, \label{OgDynb} \\
&\bOg\frac{\bL}{\sqrt{g}}-\frac{\sqrt{g}}{\tbL}\bOg=\bO_2\frac{\bL}{\sqrt{g}}, \label{OgDync}
\end{align}
\ese
in which
\begin{align*}
\bO_j\doteq\sum_{i=0}^{j-1}\left(-\tbLd\right)^i\bO\bLd^{j-1-i} \quad \hbox{and} \quad
\bO'_j\doteq\sum_{i=0}^{j-1}g^i\,\tbL^{-i}\bO\bL^{j-1-i}.
\end{align*}

\begin{proposition}\label{P:UDyn}
The infinite matrix $\bU_n$ defined by \eqref{Potential} satisfies the following continuous and discrete dynamical evolutions:
\bse\label{UDyn}
\begin{align}
&\partial_{2j+1}\bU_n=\bLd^{2j+1}\bU_n+\bU_n\tbLd^{2j+1}-\bU_n\left(\bO_{2j+1}\bL-\tbL\bO_{2j+1}\right)\bU_n, \label{UDyna} \\
&\partial_{2j}\bU_n=\left[\bL^j-\left(\frac{g}{\bL}\right)^j\right]\bU_n+\bU_n\left[\tbL^j-\left(\frac{g}{\tbL}\right)^j\right]-\bU_n\left(\bO'_j\bLd-\tbLd\bO'_j\right)\bU_n,  \label{UDynb} \\
&\bU_{n+1}\frac{\sqrt{g}}{\tbL}=\frac{\bL}{\sqrt{g}}\bU_n-\bU_{n+1}\bO_2\frac{\bL}{\sqrt{g}}\bU_n.  \label{UDync}
\end{align}
\ese
\end{proposition}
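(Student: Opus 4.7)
The plan is to use the factorised form $\bU_n=\bC_n(1+\bOg\bC_n)^{-1}$ from \eqref{U}, equivalently $\bU_n(1+\bOg\bC_n)=\bC_n$, as the unifying starting point. In each of the three cases, I apply the relevant operation (a partial derivative in $x_{2j+1}$ or $x_{2j}$, or the unit shift $n\mapsto n+1$) to this relation, substitute the known $\bC_n$-dynamics from \eqref{CDyn}, and then use the algebraic relations \eqref{OgDyn} to commute the index-raising operators past $\bOg$. The goal is to re-factor everything on the right as $(\,\cdot\,)\bC_n$ (possibly with correction terms already of the form $\bU_n\bO_\star\,\bL\,\bC_n$ or similar), so that a right multiplication by $(1+\bOg\bC_n)^{-1}$ converts each $\bC_n$ into a $\bU_n$ and yields the claimed expression.

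For \eqref{UDyna}, I apply $\partial_{2j+1}$ to $\bU_n(1+\bOg\bC_n)=\bC_n$ and rearrange into
\begin{align*}
(\partial_{2j+1}\bU_n)(1+\bOg\bC_n)=(1-\bU_n\bOg)\,\partial_{2j+1}\bC_n,
\end{align*}
substitute $\partial_{2j+1}\bC_n=\bLd^{2j+1}\bC_n+\bC_n\tbLd^{2j+1}$ from \eqref{CDyna}, and then replace $\bU_n\bOg\bLd^{2j+1}$ using \eqref{OgDyna}, namely $\bOg\bLd^{2j+1}=-\tbLd^{2j+1}\bOg+\bO_{2j+1}\bL-\tbL\bO_{2j+1}$. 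The $\tbLd^{2j+1}\bOg\bC_n$ term combines with $\bU_n\tbLd^{2j+1}$ to produce $\bU_n\tbLd^{2j+1}(1+\bOg\bC_n)$; the remaining piece is $[\bLd^{2j+1}-\bU_n(\bO_{2j+1}\bL-\tbL\bO_{2j+1})]\bC_n$. Multiplying on the right by $(1+\bOg\bC_n)^{-1}$ and using $\bC_n(1+\bOg\bC_n)^{-1}=\bU_n$ yields \eqref{UDyna} directly. Case \eqref{UDynb} is entirely analogous, using \eqref{CDynb} and \eqref{OgDynb} in place of their odd counterparts.

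For the discrete relation \eqref{UDync}, I start from $\bU_{n+1}(1+\bOg\bC_{n+1})=\bC_{n+1}$ and multiply on the right by $\sqrt{g}/\tbL$. On the right-hand side this gives $\bC_{n+1}\,\sqrt{g}/\tbL=(\bL/\sqrt{g})\bC_n$ by \eqref{CDync}; on the left I commute $\sqrt{g}/\tbL$ past $\bOg\bC_{n+1}$ by first writing $\bOg\bC_{n+1}\,\sqrt{g}/\tbL=\bOg\,(\bL/\sqrt{g})\,\bC_n$ and then invoking \eqref{OgDync} to get $\bOg\,\bL/\sqrt{g}=(\sqrt{g}/\tbL)\bOg+\bO_2\,\bL/\sqrt{g}$. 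Collecting terms produces
\begin{align*}
\bU_{n+1}\tfrac{\sqrt{g}}{\tbL}(1+\bOg\bC_n)=\bigl(\tfrac{\bL}{\sqrt{g}}-\bU_{n+1}\bO_2\tfrac{\bL}{\sqrt{g}}\bigr)\bC_n,
\end{align*}
and a final right multiplication by $(1+\bOg\bC_n)^{-1}$, once again converting $\bC_n$ into $\bU_n$, completes \eqref{UDync}.

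The only genuinely delicate point is bookkeeping: one must carefully distinguish left- versus right-action of $\bLd,\bL$ and $\tbLd,\tbL$, avoid illegally commuting $\bL$ or $\tbL$ with $(1+\bOg\bC_n)^{-1}$ (which is why the correction $\bU_{n+1}\bO_2\,\bL/\sqrt{g}$ must be pulled to the left as a factor acting on $\bC_n$ before inverting), and verify that the rank-one correction terms produced by the algebraic identities \eqref{OgDyn} match the exact forms $\bO_{2j+1}\bL-\tbL\bO_{2j+1}$, $\bO'_j\bLd-\tbLd\bO'_j$, and $\bO_2\,\bL/\sqrt{g}$ appearing in \eqref{UDyn}. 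Once the commutation is tracked correctly, the identifications via $\bU_n=\bC_n(1+\bOg\bC_n)^{-1}$ are purely formal and yield all three dynamical relations simultaneously from a single template.
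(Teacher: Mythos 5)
Your proposal is correct and follows essentially the same route as the paper's proof: differentiate/shift the relation $\bU_n(1+\bOg\bC_n)=\bC_n$, insert the $\bC_n$-dynamics \eqref{CDyn}, use the kernel identities \eqref{OgDyn} to move the index-raising operators past $\bOg$, and finish by right-multiplying with $(1+\bOg\bC_n)^{-1}$. The only difference is cosmetic (you spell out \eqref{UDyna} where the paper spells out \eqref{UDynb}), so nothing further is needed.
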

\begin{proof}
We only prove \eqref{UDynb} and \eqref{UDync}.
Differentiating \eqref{U} with respect to $x_{2j}$ gives rise to
\begin{align*}
\partial_{2j}\bU_n=(1-\bU_n\bOg)(\partial_{2j}\bC_n)-(\partial_{2j}\bU_n)\bOg\bC, \quad \hbox{namely} \quad
(\partial_{2j}\bU_n)(1+\bOg\bC_n)=(1-\bU_n\bOg)(\partial_{2j}\bC_n).
\end{align*}
Notice that the infinite matrix $\bC_n$ obeys the evolution given by \eqref{CDynb}. The above equation is reformulated as
\begin{align*}
(\partial_{2j}\bU_n)(1+\bOg\bC_n)=(1-\bU_n\bOg)\left[\bL^j-\left(\frac{g}{\bL}\right)^j\right]\bC_n+\bU_n\left[\tbL^j-\left(\frac{g}{\tbL}\right)^j\right].
\end{align*}
We can now replace $\bOg\left[\bL^j-\left(\frac{g}{\bL}\right)^j\right]$ by following \eqref{OgDynb}. This in turn implies
\begin{align*}
(\partial_{2j}\bU_n)(1+\bOg\bC_n)
=\left[\bL^j-\left(\frac{g}{\bL}\right)^j\right]\bC_n+\bU_n\left[\tbL^j-\left(\frac{g}{\tbL}\right)^j\right](1+\bOg\bC_n)-\bU_n\left(\bO'_j\bLd-\tbLd\bO'_j\right)\bC_n,
\end{align*}
which immediately results in \eqref{UDynb} by multiplying $(1+\bOg\bC_n)^{-1}$ from the right.
Equation \eqref{UDync} is derived by a similar approach. We shift \eqref{U} with respect to $n$ and obtain
\begin{align*}
\bU_{n+1}\frac{\sqrt{g}}{\tbL}=(1-\bU_{n+1}\bOg)\frac{\bL}{\sqrt{g}}\bC_n
\end{align*}
in virtue of \eqref{CDync}. By substituting $\bOg\frac{\bL}{\sqrt{g}}$ with the help of \eqref{OgDync}, this equation turns out to be
\begin{align*}
\bU_{n+1}\frac{\sqrt{g}}{\tbL}=\frac{\bL}{\sqrt{g}}\bC_n-\bU_{n+1}\left(\bO_2\frac{\bL}{\sqrt{g}}+\frac{\sqrt{g}}{\tbL}\bOg\right)\bC_n,
\end{align*}
i.e.
\begin{align*}
\bU_{n+1}\frac{\sqrt{g}}{\tbL}(1+\bOg\bC_n)=\frac{\bL}{\sqrt{g}}\bC_n-\bU_{n+1}\bO_2\frac{\bL}{\sqrt{g}}\bC_n.
\end{align*}
Multiplying $(1+\bOg\bC_n)^{-1}$ from the right, we end up with \eqref{UDync}.
Equation \eqref{UDyna} is proven in a similar way.
\end{proof}

\begin{proposition}\label{P:USym}
The infinite matrix $\bU_n$ satisfies the antisymmetry condition
\begin{align}\label{USym}
\tbU_n=-\bU_n, \quad \hbox{and consequently} \quad \bU_n^{[j,i]}=-\bU_n^{[i,j]}
\end{align}
in terms of the elliptic index labels $[i,j]$ for all $i,j\in\mathbb{Z}$.
\end{proposition}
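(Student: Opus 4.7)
The plan is to prove the matrix-level antisymmetry $\tbU_n=-\bU_n$ first, and then deduce the elliptic-index version $\bU_n^{[j,i]}=-\bU_n^{[i,j]}$ as a direct consequence by choosing an appropriate sandwiching of $\bU_n$ between powers of $\bL$, $\tbL$, $\bLd$ and $\tbLd$.

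For the first step I would start from the closed-form expression
\begin{align*}
\bU_n=\bC_n(1+\bOg\bC_n)^{-1}
\end{align*}
given in \eqref{U}. Taking the transpose and using the rule ${}^t(AB)={}^t\!B\,{}^t\!A$ together with the fact that both $\bC_n$ and $\bOg$ are skew-symmetric (i.e.\ $\tbC_n=-\bC_n$, $\tbOg=-\bOg$, which is guaranteed by the antisymmetry property \eqref{Measure} of the measure and the antisymmetry of the kernel \eqref{Kernel}), the transpose of $1+\bOg\bC_n$ simplifies to $1+\bC_n\bOg$, and $\tbC_n$ turns into $-\bC_n$. This yields
\begin{align*}
\tbU_n=(1+\bC_n\bOg)^{-1}\tbC_n=-(1+\bC_n\bOg)^{-1}\bC_n.
\end{align*}
The remaining step is the elementary algebraic identity $(1+\bC_n\bOg)^{-1}\bC_n=\bC_n(1+\bOg\bC_n)^{-1}$, which follows from the trivial observation $(1+\bC_n\bOg)\bC_n=\bC_n(1+\bOg\bC_n)$. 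Combining, we get $\tbU_n=-\bU_n$.

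For the elliptic-index statement I would apply the same kind of transpose manipulation to the sandwiched matrices appearing in the definition \eqref{EllEntry}. Since $\bL$ and $\bLd$ are polynomials/series in $\Ld$ while $\tbL$ and $\tbLd$ are the same polynomials/series in $\tLd={}^t\!\Ld$, one has ${}^t\!\bL^i=\tbL^i$ and ${}^t\!\bLd=\tbLd$. Hence, for example,
\begin{align*}
{}^t(\bL^i\bU_n\tbL^j)=\bL^j\,\tbU_n\,\tbL^i=-\bL^j\bU_n\tbL^i,
\end{align*}
and taking the $(0,0)$-entry (which is invariant under transposition) gives $\bU_n^{[2i,2j]}=-\bU_n^{[2j,2i]}$. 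The three remaining parities are handled identically, each time moving the factors $\bLd/\tbLd$ to the opposite side under the transpose.

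I do not expect any serious obstacle: the proof is essentially a one-line manipulation based on the skew-symmetry of $\bC_n$ and $\bOg$, together with the commutation identity $(1+\bC_n\bOg)\bC_n=\bC_n(1+\bOg\bC_n)$. The only point requiring a little care is the bookkeeping of transposes of composite operators such as $\bLd\bL^i\bU_n\tbL^j\tbLd$, where one must correctly track how $\bL^i\leftrightarrow\tbL^i$ and $\bLd\leftrightarrow\tbLd$ swap under ${}^t(\cdot)$, and recognise that the resulting skew-symmetry of $\bU_n$ is precisely what converts the sandwiched object into its index-swapped counterpart with a minus sign.
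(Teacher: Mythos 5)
Your proof is correct and follows essentially the same route as the paper: transpose the closed form $\bU_n=\bC_n(1+\bOg\bC_n)^{-1}$, invoke the skew-symmetry of $\bC_n$ and $\bOg$, and then transfer the result to the elliptic index labels via ${}^{t\!}\bL=\tbL$, ${}^{t\!}\bLd=\tbLd$. The only (cosmetic, and arguably cleaner) difference is that you use the identity $(1+\bC_n\bOg)^{-1}\bC_n=\bC_n(1+\bOg\bC_n)^{-1}$ where the paper passes through the formal inverse $\left(\bC_n^{-1}+\bOg\right)^{-1}$.
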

\begin{proof}
Notice that $\bC_n$ and $\bOg$ are both skew-symmetric. We from \eqref{U} obtain
\begin{align*}
\tbU_n&={}^{t\!}\left[\bC_n(1+\bOg\bC_n)^{-1}\right]={}^{t\!}\left[\left(\bC_n^{-1}+\bOg\right)^{-1}\right] \\
&=\left(\tbC_n^{-1}+\tbOg\right)^{-1}=-\left(\bC_n^{-1}+\bOg\right)^{-1}=-\bC_n(1+\bOg\bC_n)^{-1}=-\bU_n,
\end{align*}
and subsequently $\bU_n^{[i,j]}=-\bU_n^{[j,i]}$ for all $i,j\in\mathbb{Z}$ by following the elliptic index labels defined by \eqref{EllEntry}.
\end{proof}

We can also follow the derivation of \eqref{UDyn} and construct the dynamical relations for the wave function $\bu_n(\kp)$.
\begin{proposition}\label{P:uDyn}
The wave function of the linear integral equation \eqref{Integral} obeys dynamical evolutions with respect to the continuous variables $x_j$ and the discrete variable $n$ as follows:
\bse\label{uDyn}
\begin{align}
&\partial_{2j+1}\bu_n(\kp)=\bLd^{2j+1}\bu_n(\kp)-\bU_n\left(\bO_{2j+1}\bL-\tbL\bO_{2j+1}\right)\bu_n(\kp), \label{uDyna} \\
&\partial_{2j}\bu_n(\kp)=\left[\bL^j-\left(\frac{g}{\bL}\right)^j\right]\bu_n(\kp)-\bU_n\left(\bO'_j\bLd-\tbLd\bO'_j\right)\bu_n(\kp), \label{uDynb} \\
&\bu_{n+1}(\kp)=\frac{\bL}{\sqrt{g}}\bu_n(\kp)-\bU_{n+1}\bO_2\frac{\bL}{\sqrt{g}}\bu_n(\kp). \label{uDync}
\end{align}
\ese
\end{proposition}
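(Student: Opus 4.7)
The plan is to proceed in direct parallel with the proof of Proposition \ref{P:UDyn}, exploiting the representation \eqref{u} of the wave function as $\bu_n(\kp)=(1-\bU_n\bOg)\rho_n(\kp)\bc(\kp)$. Since $\bc(\kp)$ is a joint eigenvector of $\bLd$ and $\bL$ with eigenvalues $k$ and $K$ by \eqref{c}, and since the plane wave factor satisfies $\partial_{2j+1}\rho_n=k^{2j+1}\rho_n$, $\partial_{2j}\rho_n=[K^j-(g/K)^j]\rho_n$ and $\rho_{n+1}=(K/\sqrt{g})\rho_n$ by \eqref{PWF}, we have the handy identities
\begin{align*}
\partial_{2j+1}[\rho_n\bc]=\bLd^{2j+1}\rho_n\bc,\qquad \partial_{2j}[\rho_n\bc]=\bigl[\bL^j-(g/\bL)^j\bigr]\rho_n\bc,\qquad \rho_{n+1}\bc=(\bL/\sqrt{g})\rho_n\bc.
\end{align*}
Thus applying the relevant evolution (differentiation or discrete shift) to \eqref{u} produces only two pieces: a term in which the elliptic index-raising operator stands to the left of $\rho_n\bc$, and a term of the form $-(\partial_\ast\bU_n)\bOg\rho_n\bc$ (or $-\bU_{n+1}\bOg\cdot\ast\rho_n\bc$).

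First I would treat \eqref{uDynb} as the model case. Substituting \eqref{UDynb} for $\partial_{2j}\bU_n$ into the differentiated expression, the resulting right-hand side carries the three combinations $[\bL^j-(g/\bL)^j]\bU_n\bOg\rho_n\bc$, $\bU_n[\tbL^j-(g/\tbL)^j]\bOg\rho_n\bc$ and $\bU_n(\bO'_j\bLd-\tbLd\bO'_j)\bU_n\bOg\rho_n\bc$, alongside the direct contribution $(1-\bU_n\bOg)[\bL^j-(g/\bL)^j]\rho_n\bc$. The $[\bL^j-(g/\bL)^j]\bU_n\bOg\rho_n\bc$ terms cancel, and then the elliptic-Cauchy commutation identity \eqref{OgDynb} allows one to replace $[\tbL^j-(g/\tbL)^j]\bOg+\bOg[\bL^j-(g/\bL)^j]$ by $\bO'_j\bLd-\tbLd\bO'_j$. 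The remaining expression factors cleanly as $[\bL^j-(g/\bL)^j]\bu_n-\bU_n(\bO'_j\bLd-\tbLd\bO'_j)(1-\bU_n\bOg)\rho_n\bc$, and a final application of \eqref{u} identifies the last factor with $\bu_n(\kp)$, yielding \eqref{uDynb}.

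Equation \eqref{uDyna} follows by the identical mechanism with \eqref{UDyna} and \eqref{OgDyna} in place of \eqref{UDynb} and \eqref{OgDynb}, the odd flow being technically even lighter since no negative powers of $\bL$, $\tbL$ appear. For the discrete evolution \eqref{uDync}, I would shift \eqref{u} in $n$, use $\rho_{n+1}\bc=(\bL/\sqrt{g})\rho_n\bc$, and then insert \eqref{OgDync} to rewrite $\bU_{n+1}\bOg(\bL/\sqrt{g})\rho_n\bc$ as $\bU_{n+1}(\sqrt{g}/\tbL)\bOg\rho_n\bc+\bU_{n+1}\bO_2(\bL/\sqrt{g})\rho_n\bc$; substituting \eqref{UDync} into the first of these two produces $(\bL/\sqrt{g})\bU_n\bOg\rho_n\bc-\bU_{n+1}\bO_2(\bL/\sqrt{g})\bU_n\bOg\rho_n\bc$, after which the pieces regroup as $(\bL/\sqrt{g})(1-\bU_n\bOg)\rho_n\bc-\bU_{n+1}\bO_2(\bL/\sqrt{g})(1-\bU_n\bOg)\rho_n\bc$, and \eqref{u} again converts the bracketed factors into $\bu_n(\kp)$.

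The only real bookkeeping hazard is respecting the strict left/right ordering of $\bL$, $\tbL$, $\bLd$, $\tbLd$ and the projection factors $\bO_j$, $\bO'_j$, $\bO_2$, given the non-commutative nature of the infinite matrix algebra and the warning in the preceding remarks that $\Ld$ and $\tLd$ are never to be merged through $\bO$. Beyond this, the argument is mechanical, and the parallel with Proposition \ref{P:UDyn} is close enough that one may indicate only the derivation of \eqref{uDynb} in full and remark that \eqref{uDyna} and \eqref{uDync} follow analogously.
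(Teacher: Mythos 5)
Your proposal is correct and follows essentially the same route as the paper's own proof: differentiate (or shift) the representation \eqref{u}, substitute the already-established evolutions \eqref{UDyn} for the potential matrix, invoke the kernel identities \eqref{OgDyn} to trade $\bOg$-commutators for the projector combinations, and regroup using \eqref{u}. The only minor imprecision is the phrase that the $[\bL^j-(g/\bL)^j]\bU_n\bOg\rho_n\bc$ terms ``cancel''---that term in fact combines with the direct contribution to form $[\bL^j-(g/\bL)^j]\bu_n$, as your subsequent factored expression correctly shows.
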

\begin{proof}
We only present the proof of \eqref{uDynb} and \eqref{uDync}. By differentiating \eqref{u} with respect to $x_{2j}$, we obtain
\begin{align*}
\partial_{2j}\bu_n(k)=(1-\bU_n\bOg)[\partial_{2j}\rho_n(\kp)]\bc(\kp)-(\partial_{2j}\bU_n)\bOg\rho_n(\kp)\bc(\kp).
\end{align*}
Equations \eqref{c} and \eqref{UDynb} can help us to reformulate the above equation as
\begin{align*}
\partial_{2j}\bu_n(k)={}&\left[\bL^j-\left(\frac{g}{\bL}\right)^j\right](1-\bU_n\bOg)\rho_n(\kp)\bc(\kp) \\
&-\bU_n\left\{\bOg\left[\bL^j-\left(\frac{g}{\bL}\right)^j\right]+\left[\tbL^j-\left(\frac{g}{\tbL}\right)^j\right]\bOg\right\}\rho(\kp)\bc(\kp) \\
&+\bU_n\left(\bO'_j\bLd-\tbLd\bO'_j\right)\bU_n\bOg\rho_n(\kp)\bc(\kp).
\end{align*}
Replacing $\bOg\left[\bL^j-\left(\frac{g}{\bL}\right)^j\right]+\left[\tbL^j-\left(\frac{g}{\tbL}\right)^j\right]\bOg$ by $\bO'_j\bLd-\tbLd\bO'_j$ according to \eqref{OgDynb}, we reach to
\begin{align*}
\partial_{2j}\bu_n(k)=\left[\bL^j-\left(\frac{g}{\bL}\right)^j\right](1-\bU_n\bOg)\rho_n(\kp)\bc(\kp)-\bU_n\left(\bO'_j\bLd-\tbLd\bO'_j\right)(1-\bU_n\bOg)\rho_n(\kp)\bc(\kp),
\end{align*}
which is nothing but equation \eqref{uDynb} according to \eqref{u}.
To derive \eqref{uDync}, we shift \eqref{u} with respect to $n$. This gives rise to
\begin{align*}
\bu_{n+1}(\kp)=(1-\bU_{n+1}\bOg)\rho_{n+1}(\kp)\bc(\kp)=(1-\bU_{n+1}\bOg)\frac{\bL}{\sqrt{g}}\rho_n(\kp)\bc(\kp).
\end{align*}
Then \eqref{OgDync} leads this equation to
\begin{align*}
\bu_{n+1}(\kp)=\frac{\bL}{\sqrt{g}}\rho_n(\kp)\bc(\kp)-\bU_{n+1}\left(\frac{\sqrt{g}}{\tbL}\bOg+\bO_2\frac{\bL}{\sqrt{g}}\right)\rho_n(\kp)\bc(\kp).
\end{align*}
Finally by substituting $\bU_{n+1}\frac{\sqrt{g}}{\tbL}$ with the help of \eqref{UDync}, the above equation turns out to be
\begin{align*}
\bu_{n+1}(\kp)=\frac{\bL}{\sqrt{g}}(1-\bU_n\bOg)\rho_n(\kp)\bc(\kp)-\bU_{n+1}\bO_2\frac{\bL}{\sqrt{g}}(1-\bU_n\bOg)\rho_n(\kp)\bc(\kp),
\end{align*}
namely equation \eqref{uDync} is proven in virtue of \eqref{u}. Equation \eqref{uDyna} can be proven through the same procedure.
\end{proof}
Finally, we present the dynamics of the $\tau$-function in terms of the indices of the infinite matrix $\bU_n$.
\begin{proposition}
The $\tau$-function satisfies dynamical evolutions
\bse\label{tauDyn}
\begin{align}\label{tauDyna}
2\partial_{2j+1}\ln\tau_n=\sum_{i=0}^{2j}(-1)^i\left(\bLd^{2j-i}\bL\bU_n\tbLd^i-\bLd^{2j-i}\bU_n\tbL\tbLd^i\right)^{(0,0)}
\end{align}
and
\begin{align}\label{tauDynb}
2\partial_{2j}\ln\tau_n=\sum_{i=0}^{j-1}g^i\left(\bL^{j-1-i}\bLd\bU_n\tbL^{-i}-\bL^{j-1-i}\bU_n\tbLd\tbL^{-i}\right)^{(0,0)}
\end{align}
with respect to the continuous arguments $x_j$, as well as
\begin{align}\label{tauDync}
\frac{\tau_{n+1}}{\tau_n}=1+g^{-1}\bU_n^{[3,2]} \quad \hbox{and} \quad \frac{\tau_{n-1}}{\tau_n}=1-\bU_n^{[1,0]}
\end{align}
with respect to the discrete argument $n$.
\ese
\end{proposition}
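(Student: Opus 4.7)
The plan is to start from the logarithmic identity $2\ln\tau_n=\ln\det(1+\bOg\bC_n)=\tr[\ln(1+\bOg\bC_n)]$ and differentiate, obtaining
\begin{align*}
2\partial_{j}\ln\tau_n=\tr\bigl[(1+\bOg\bC_n)^{-1}\bOg(\partial_j\bC_n)\bigr].
\end{align*}
For the continuous evolutions \eqref{tauDyna} and \eqref{tauDynb}, I would substitute the $\bC_n$-evolutions \eqref{CDyna}--\eqref{CDynb}, which split the trace into a ``left-index raising'' piece and a ``right-index raising'' piece. Since $(1+\bOg\bC_n)^{-1}$ commutes with $\bOg\bC_n$, standard manipulations together with cyclicity of the trace and the identity $\bU_n=\bC_n(1+\bOg\bC_n)^{-1}$ should collapse both pieces into a single trace of the form $\tr[\bU_n(\bOg\bLd^{2j+1}+\tbLd^{2j+1}\bOg)]$ in the odd case, and analogously for the even case.

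Next I would apply the ``elliptic Cauchy kernel'' identities \eqref{OgDyna} and \eqref{OgDynb} to rewrite $\bOg\bLd^{2j+1}+\tbLd^{2j+1}\bOg$ as $\bO_{2j+1}\bL-\tbL\bO_{2j+1}$ and similarly for the even case. Inserting the explicit telescoping definitions of $\bO_{j}$ and $\bO'_{j}$, which involve the projector $\bO=\be\,\tbe$, and invoking the identity $\tr[X\bO]=\tbe\,X\be=X^{(0,0)}$ together with the cyclic property and the commutativity of $\bL,\bLd$ (resp. $\tbL,\tbLd$), yields exactly the claimed sums of $(0,0)$-entries in \eqref{tauDyna} and \eqref{tauDynb}.

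For the discrete relation \eqref{tauDync}, I would compute the ratio $\tau_{n+1}^2/\tau_n^2=\det[(1+\bOg\bC_{n+1})(1+\bOg\bC_n)^{-1}]$ using the shift $\bC_{n+1}=(\bL/\sqrt{g})\bC_n(\tbL/\sqrt{g})$ that follows from \eqref{CDync}, followed by $\det(1+AB)=\det(1+BA)$ to replace $\bOg$ by $\bOg':=(\tbL/\sqrt{g})\bOg(\bL/\sqrt{g})$. The identity \eqref{OgDync} gives $\bOg'-\bOg=g^{-1}\tbL\bO_2\bL$, which is a rank-two perturbation (since $\bO=\be\,\tbe$ and $\bO_2=\bO\bLd-\tbLd\bO$). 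Applying $\det(1+AB)=\det(1+BA)$ again reduces the problem to computing a $2\times 2$ determinant whose entries are values of $\bU_n$ at elliptic indices $[i,j]$ with $i,j\in\{2,3\}$. Here Proposition \ref{P:USym}, together with $\bU_n^{[i,i]}=0$, forces the off-diagonal entries to vanish and makes the diagonal entries equal to $\bU_n^{[3,2]}$, so the $2\times 2$ determinant collapses to $(1+g^{-1}\bU_n^{[3,2]})^2$, giving a perfect square from which the square root can be extracted. The analogous computation using $\bC_{n-1}=g\bL^{-1}\bC_n\tbL^{-1}$ and the rearrangement of \eqref{OgDynb} produces $(1-\bU_n^{[1,0]})^2$.

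The main technical obstacle is the discrete case: the computation is only tractable because the antisymmetry $\tbU_n=-\bU_n$ of Proposition \ref{P:USym} conspires with the elliptic index structure \eqref{EllEntry} to make the rank-two $2\times 2$ determinant a perfect square, which is precisely what is needed to take the square root of $\tau_n^2=\det(1+\bOg\bC_n)$ and obtain a linear (rather than quadratic) expression in the entries of $\bU_n$. Without this Pfaffian-type cancellation one would only recover a quadratic identity for $\tau_{n\pm1}^2/\tau_n^2$.
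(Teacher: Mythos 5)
Your proposal follows essentially the same route as the paper: differentiating $\tr[\ln(1+\bOg\bC_n)]$, substituting the $\bC_n$-evolutions, invoking the kernel identities \eqref{OgDyn} to reduce to $(0,0)$-entries of $\bU_n$, and for the discrete shifts using the rank-two Weinstein--Aronszajn reduction together with the antisymmetry \eqref{USym} to obtain the perfect square $\left(1+g^{-1}\bU_n^{[3,2]}\right)^2$. The only slip is that the backward shift uses a rearrangement of \eqref{OgDync}, not \eqref{OgDynb}; otherwise the argument matches the paper's proof.
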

\begin{proof}
We first prove \eqref{tauDyna}. Differentiating the logarithm of the $\tau$-function with respective to $x_{2j+1}$ gives rise to
\begin{align*}
\partial_{2j+1}\ln\tau_n^2=\partial_{2j+1}\ln[\det(1+\bOg\bC_n)]=\partial_{2j+1}\tr[\ln(1+\bOg\bC_n)]=\tr[(1+\bOg\bC_n)^{-1}\bOg(\partial_{2j+1}\bC_n)].
\end{align*}
Replacing $\partial_{2j+1}\bC_n$ with the help of \eqref{CDyna} and \eqref{U}, we then obtain
\begin{align*}
\partial_{2j+1}\ln\tau_n^2=\tr[(1+\bOg\bC_n)^{-1}\bOg(\bLd^{2j+1}\bC_n+\bC_n\tbLd^{2j+1})]=\tr[(\bOg\bLd^{2j+1}+\tbLd^{2j+1}\bOg)\bU_n].
\end{align*}
Recall that $\bOg^{2j+1}\bLd+\tbLd^{2j+1}\bOg=\bO_{2j+1}\bL-\tbL\bO_{2j+1}$.
We end up with
\begin{align*}
2\partial_{2j+1}\ln\tau_n=\partial_{2j+1}\ln\tau_n^2=\tr[(\bO_{2j+1}\bL-\tbL\bO_{2j+1})\bU_n],
\end{align*}
which is nothing but \eqref{tauDyna}. Equation \eqref{tauDynb} follows from a similar derivation.
Next, performing the shift operation on \eqref{tau} we obtain
\begin{align*}
\tau_{n+1}^2&=\det(1+\bOg\bC_{n+1})=\det\left(1+\bOg\frac{\bL}{\sqrt{g}}\bC_n\frac{\tbL}{\sqrt{g}}\right)
=\det\left[1+\left(\bO_2\frac{\bL}{\sqrt{g}}+\frac{\sqrt{g}}{\tbL}\bOg\right)\bC_n\frac{\tbL}{\sqrt{g}}\right] \\
&=\det\left[1+\bOg\bC_n+(1+\bOg\bC_n)^{-1}\frac{\tbL}{\sqrt{g}}\bO_2\frac{\bL}{\sqrt{g}}\bC_n\right]
=\tau_n^2\det\left[1+(1+\bOg\bC_n)^{-1}\frac{\tbL}{\sqrt{g}}\bO_2\frac{\bL}{\sqrt{g}}\bC_n\right],
\end{align*}
where the second and third equalities hold because of \eqref{CDync} and \eqref{OgDync}, respectively.
Hence, this equation can further be rewritten as
\begin{align*}
\frac{\tau_{n+1}^2}{\tau_n^2}
&=\det\left[1+g^{-1}(1+\bOg\bC_n)^{-1}(\tbL,-\tbL\tbLd)\bO
\begin{pmatrix}
\bLd\bL \\
\bL
\end{pmatrix}
\bC_n\right] \\
&=\det\left[1+g^{-1}\left((1+\bOg\bC_n)^{-1}\tbL\be,-(1+\bOg\bC_n)^{-1}\tbL\tbLd\be\right)
\begin{pmatrix}
\tbe\bLd\bL\bC_n \\
\tbe\bL\bC_n
\end{pmatrix}
\right],
\end{align*}
where we have used the identity $\bO=\be\,\tbe$,
namely $\tau_{n+1}^2/\tau_n^2$ is of the form
\begin{align*}
\det\left[1+(\ba_1,\ba_2)
\begin{pmatrix}
\tbb_1 \\
\tbb_2
\end{pmatrix}
\right]
\end{align*}
for infinite column vectors $\ba_i$ and infinite row vectors $\tbb_i$ for $i=1,2$.
Using the rank $2$ Weinstein--Aronszajn formula
\begin{align*}
\det\left[1+(\ba_1,\ba_2)
\begin{pmatrix}
\tbb_1 \\
\tbb_2
\end{pmatrix}
\right]
=
\det\left[1+
\begin{pmatrix}
\tbb_1\ba_1 & \tbb_1\ba_2 \\
\tbb_2\ba_1 & \tbb_2\ba_2
\end{pmatrix}
\right]
\end{align*}
and also equation \eqref{U}, we obtain
\begin{align*}
\frac{\tau_{n+1}^2}{\tau_n^2}
&=\det
\begin{bmatrix}
1+g^{-1}\tbe\bLd\bL\bU_n\tbL\be & -g^{-1}\tbe\bLd\bL\bU_n\tbL\tbLd\be \\
g^{-1}\tbe\bL\bU_n\tbLd\be & 1-g^{-1}\tbe\bL\bU_n\tbL\tbLd\be
\end{bmatrix} \\
&=\det
\begin{bmatrix}
1+g^{-1}(\bLd\bL\bU_n\tbL)^{(0,0)} & -g^{-1}(\bLd\bL\bU_n\tbL\tbLd)^{(0,0)} \\
g^{-1}(\bL\bU_n\tbLd)^{(0,0)} & 1-g^{-1}(\bL\bU_n\tbL\tbLd)^{(0,0)}
\end{bmatrix} \\
&=\det
\begin{bmatrix}
1+g^{-1}\bU_n^{[3,2]} & -g^{-1}\bU_n^{[3,3]} \\
g^{-1}\bU_n^{[2,2]} & 1-g^{-1}\bU_n^{[2,3]}
\end{bmatrix}
=\left(1+g^{-1}\bU_n^{[3,2]}\right)^2,
\end{align*}
where the second equality holds because of the third equation in \eqref{UnitElement},
and we have made use of the antisymmetry property \eqref{USym} and \eqref{EllEntry} for the last equality.
Without loss of generality, we have $\tau_{n+1}/\tau_n=1+g^{-1}\bU_n^{[3,2]}$.
Likewise, by performing the backward shift operation on \eqref{tau}, we derive the other relation in \eqref{tauDync}.
\end{proof}

Equations \eqref{UDyna}, \eqref{UDynb} and \eqref{USym} together form the infinite matrix representation of the elliptic coupled \ac{KP} hierarchy,
describing possible dynamical evolutions and an algebraic constraint of the potential matrix.
Meanwhile, equations in \eqref{uDyn} form the infinite vector representation of the Lax pair of the elliptic coupled \ac{KP} hierarchy.
In the subsequent section, we shall present a closed-form multi-component nonlinear system composed of certain entries of the infinite matrix $\bU_n$
and a linear system based on a particular component of the infinite vector $\bu_n(\kp)$,
which respectively form the closed-form elliptic coupled \ac{KP} system and its Lax pair.

\section{Closed-form nonlinear system and associated linear problem}\label{S:EllKP}

We now construct the closed-form elliptic coupled \ac{KP} system and its Lax pair.
Our attention is only paid to the first nontrivial flow in the hierarchy,
namely the equation evolving with respect to the flow variables $x_1$, $x_2$ and $x_3$.
Our construction of the elliptic coupled \ac{KP} system is based on new variables as follows:
\bse\label{Component}
\begin{align}
&u_n\doteq\bU_n^{[2,0]}\equiv-\bU_n^{[0,2]}, \\
&v_n\doteq1+g^{-1}\bU_n^{[3,2]}\equiv1-g^{-1}\bU_n^{[2,3]}, \\
&w_n\doteq1-\bU_n^{[1,0]}\equiv1+\bU_n^{[0,1]}.
\end{align}
\ese
From the dynamical equations in \eqref{UDyn}, we are able to find a closed-form three-component system composed of the variables $u_n$, $v_n$ and $w_n$.
We present the result as the following theorem.
\begin{theorem}
Suppose that $\bu_n(\kp)$ is a solution to the linear integral equation \eqref{Integral} subject to \eqref{PWF}, \eqref{Kernel} and \eqref{Measure}.
The variables $u_n$, $v_n$ and $w_n$ defined by \eqref{Component} provide solutions to the elliptic coupled \ac{KP} system
\bse\label{EllKP:NL}
\begin{align}
&\partial_3u_n=\frac{1}{4}\partial_1^3u_n+\frac{3}{2}(\partial_1u_n)^2+\frac{3}{4}\partial^{-1}_1\partial_2^2u_n+3g(1-v_nw_n), \label{EllKP:NLa} \\
&\partial_3v_n=-\frac{1}{2}\partial_1^3v_n-3(\partial_1u_n-3e)\partial_1v_n+\frac{3}{2}\partial_1\partial_2v_n+3(\partial_2u_n)v_n, \label{EllKP:NLb} \\
&\partial_3w_n=-\frac{1}{2}\partial_1^3w_n-3(\partial_1u_n-3e)\partial_1w_n-\frac{3}{2}\partial_1\partial_2w_n-3(\partial_2u_n)w_n, \label{EllKP:NLc}
\end{align}
\ese
in which $e$ and $g$ are the moduli of the elliptic curve \eqref{Curve}.
Equation \eqref{EllKP:NL} is exactly the same as \eqref{DJM} by ignoring the discrete variable $n$.
\end{theorem}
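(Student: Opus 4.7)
The plan is to derive each of the three equations in \eqref{EllKP:NL} directly from Proposition~\ref{P:UDyn} by extracting the appropriate elliptic index entries from the infinite-matrix evolutions \eqref{UDyna} and \eqref{UDynb}, and then closing up the resulting system by means of the antisymmetry \eqref{USym} together with the elliptic curve identity \eqref{EllIndex}.

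First I would compute the $x_1$ and $x_2$ derivatives of the basic variables. Taking the $[2,0]$-entry of \eqref{UDyna} at $j=0$ expresses $\partial_1 u_n$ as a linear combination of the ``non-basic'' entries $\bU_n^{[3,0]}$ and $\bU_n^{[2,1]}$ together with a quadratic piece produced by $\bO_1=\bO$; the analogous computation for the $[1,0]$- and $[3,2]$-entries yields $\partial_1 w_n$ and $\partial_1 v_n$. Inverting these relations expresses each of $\bU_n^{[3,0]}$, $\bU_n^{[2,1]}$, $\bU_n^{[3,1]}$, $\bU_n^{[4,2]}$, and so on, in terms of $u_n$, $v_n$, $w_n$ and their $x_1$-derivatives, while \eqref{USym} removes the duplication from transposed entries. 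The $x_2$ evolutions come from \eqref{UDynb} with $j=1$, where the operator $\bL-g/\bL$ acts diagonally in the elliptic index labels and produces an additional quadratic piece built from $\bO'_1=\bO$.

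The $x_3$ evolution is obtained by applying \eqref{UDyna} with $j=1$. Here the elliptic character enters through the reduction $\bLd^3=\bLd\bL+3e\bLd+g\bLd\bL^{-1}$ provided by \eqref{EllIndex}, which converts $(\bL\bLd^3 \bU_n)^{(0,0)}$ into a linear combination of entries that have already been identified at the previous step; the quadratic term involving $\bO_3=\bO\bLd^2-\tbLd\bO\bLd+\tbLd^2\bO$ expands through $\bO=\be\,\tbe$ and \eqref{UnitElement} into products of elliptic-index entries of $\bU_n$. The crucial contribution $3g(1-v_n w_n)$ in \eqref{EllKP:NLa} is produced precisely by the $g\bLd\bL^{-1}$ piece of the curve relation, which couples the entries $\bU_n^{[1,0]}$ and $\bU_n^{[3,2]}$ through the definitions \eqref{Component} of $v_n$ and $w_n$. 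The analogous computations starting from the $[3,2]$- and $[1,0]$-entries of the same flow yield $\partial_3 v_n$ and $\partial_3 w_n$; the opposite signs of the $\partial_1\partial_2$ and $(\partial_2 u_n)$ terms in \eqref{EllKP:NLb} versus \eqref{EllKP:NLc} follow from \eqref{USym} together with the sign difference between the $g^{-1}\bU_n^{[3,2]}$ and $-\bU_n^{[1,0]}$ in the definitions of $v_n$ and $w_n$.

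The principal obstacle is the combinatorial bookkeeping: the quadratic term in \eqref{UDyna} at $j=1$ generates a large number of cross-products of entries of $\bU_n$ carrying indices up to $4$, and one must consistently rewrite every non-basic entry via $x_1$-derivatives of the first-step relations (and $x_2$-derivatives for the mixed $\partial_1\partial_2$ terms), use \eqref{USym} to collapse transposed pairs, and then verify that all terms not manifestly of the form appearing in \eqref{EllKP:NL} cancel. Once this bookkeeping is carried out, no further identity is required beyond the curve relation \eqref{EllIndex}, the antisymmetry of $\bU_n$, and the definitions \eqref{Component}.
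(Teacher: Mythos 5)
Your proposal follows essentially the same route as the paper: extract the relevant elliptic-index entries from \eqref{UDyna} and \eqref{UDynb}, reduce powers of $\bLd$ via the curve relation \eqref{EllIndex}, expand the quadratic terms through $\bO=\be\,\tbe$, and invoke the antisymmetry \eqref{USym} to close the system --- this is precisely the paper's direct algebraic verification. The one step you do not address is the pseudo-differential term $\partial_1^{-1}\partial_2^2u_n$ in \eqref{EllKP:NLa}: the paper verifies the $x_1$-differentiated form of that equation (so that only $\partial_2^2u_n$ appears algebraically) and then integrates back in $x_1$, fixing the constant of integration by the asymptotic conditions $u_n\to 0$, $v_n\to 1$, $w_n\to 1$ that follow from \eqref{Component}; without this observation the term $3g(1-v_nw_n)$ is only determined up to an arbitrary function of $x_2$ and $x_3$.
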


\begin{proof}
Equations in \eqref{EllKP:NL} are verified by direct computation.
To realise this, we first of all need the fundamental formulae for the first-order derivatives $\partial_\mathfrak{j}\bU_n^{[i,j]}$
for $\mathfrak{j}\in\mathbb{Z}^+$ and $i,j\in\mathbb{Z}$.
Notice that
\begin{align*}
&\partial_{\mathfrak{j}}\bU_n^{[2i,2j]}=\partial_{\mathfrak{j}}(\bL^i\bU_n\tbL^j)^{(0,0)}=(\bL^i\partial_{\mathfrak{j}}\bU_n\tbL^j)^{(0,0)}, \\
&\partial_{\mathfrak{j}}\bU_n^{[2i+1,2j]}=\partial_{\mathfrak{j}}(\bLd\bL^i\bU_n\tbL^j)^{(0,0)}=(\bLd\bL^i\partial_{\mathfrak{j}}\bU_n\tbL^j)^{(0,0)}, \\
&\partial_{\mathfrak{j}}\bU_n^{[2i,2j+1]}=\partial_{\mathfrak{j}}(\bL^i\bU_n\tbL^j\tbLd)^{(0,0)}=(\bL^i\partial_{\mathfrak{j}}\bU_n\tbL^j\tbLd)^{(0,0)}, \\
&\partial_{\mathfrak{j}}\bU_n^{[2i+1,2j+1]}=\partial_{\mathfrak{j}}(\bLd\bL^i\bU_n\tbL^j\tbLd)^{(0,0)}=(\bLd\bL^i\partial_{\mathfrak{j}}\bU_n\tbL^j\tbLd)^{(0,0)}
\end{align*}
due to \eqref{EllEntry}.
Replacing all the $\partial_{\mathfrak{j}}\bU_n$ with the help of \eqref{UDyna} and \eqref{UDynb} for odd and even $\mathfrak{j}$, respectively,
we end up with the formulae for all the $\partial_\mathfrak{j}\bU_n^{[i,j]}$ expressed by various algebraic combinations of $\bU_n^{[i,j]}$.
For instance, when $i=2$ and $j=1$ the simplest formulae are given by
\begin{align*}
\partial_1\bU_n^{[2,0]}=\bU_n^{[3,0]}+\bU_n^{[2,1]}-\bU_n^{[2,0]}\bU_n^{[2,0]}+\bU_n^{[2,2]}\bU_n^{[0,0]}
\end{align*}
and
\begin{align*}
\partial_2\bU_n^{[2,0]}={}&\bU_n^{[4,0]}-g\bU_n^{[0,0]}+\bU_n^{[2,2]}-g\bU_n^{[2,-2]} \\
&-\bU_n^{[2,0]}\bU_n^{[3,0]}-\bU_n^{[2,2]}\bU_n^{[1,0]}+\bU_n^{[2,1]}\bU_n^{[2,0]}+\bU_n^{[2,3]}\bU_n^{[0,0]}.
\end{align*}
Then by iteration, we can further derive the general formulae for the higher-order derivatives of $\bU_n^{[i,j]}$ such as
$\partial_\mathfrak{j}^2\bU_n^{[i,j]}$, $\partial_\mathfrak{j}^3\bU_n^{[i,j]}$, $\partial_\mathfrak{j}^4\bU_n^{[i,j]}$,
$\partial_\mathfrak{i}\partial_\mathfrak{j}\bU_n^{[i,j]}$, etc. for $\mathfrak{i},\mathfrak{j}\in\mathbb{Z}^+$ and $i,j\in\mathbb{Z}$.
These formulae together with the antisymmetry property \eqref{USym} allow us to verify equations given by \eqref{EllKP:NL} in a purely algebraic way.
For example, in order to verify \eqref{EllKP:NLa} we need the corresponding formulae for
$\partial_1\bU_n^{[2,0]}$, $\partial_1^2\bU_n^{[2,0]}$, $\partial_1^4\bU_n^{[2,0]}$, $\partial_2^2\bU_n^{[2,0]}$ and $\partial_1\partial_3\bU_n^{[2,0]}$.
Then it is verified that
\begin{align*}
-\partial_1\partial_3\bU_n^{[2,0]}
+\frac{1}{4}\partial_1^4\bU_n^{[2,0]}+3\left(\partial_1\bU_n^{[2,0]}\right)\left(\partial_1^2\bU_n^{[2,0]}\right)+\frac{3}{4}\partial_2^2\bU_n^{[2,0]}
-3g\partial_1\left[\left(1+g^{-1}\bU_n^{[3,2]}\right)\left(1-\bU_n^{[1,0]}\right)\right]
\end{align*}
vanishes in virtue of $\bU_n^{[j,i]}=-\bU_n^{[i,j]}$.
This essentially means that
\begin{align*}
\partial_1\partial_3u_n=\partial_1\left(\frac{1}{4}\partial_1^3u_n+\frac{3}{2}(\partial_1u_n)^2\right)+\frac{3}{4}\partial_2^2u_n-3g\partial_1(v_nw_n)
\end{align*}
Observing the conditions on the asymptotic limits $u_n\rightarrow0$, $v_n\rightarrow1$ and $w_n\rightarrow1$ (which follow from \eqref{Component}),
we immediately derive \eqref{EllKP:NLa} by integration with respect to $x_1$.
Equations \eqref{EllKP:NLb} and \eqref{EllKP:NLc} are verified in the same manner.
\end{proof}

Next, we derive the bilinear form of the elliptic coupled \ac{KP} system.
Note that the simplest cases of \eqref{tauDyn} give us the bilinear transforms\footnote{
The last two equations imply that $v_n$ and $w_n$ are connected with each other through $v_nw_{n+1}=1$.
However, our aim is to present the elliptic coupled \ac{KP} system as a $(2+1)$-dimensional continuous integrable model,
where there is no dynamical evolution with respect to the discrete independent variable $n$.
Hence, $v_n$ and $w_n$ are treated as two separate variables, as we have done in \eqref{EllKP:NL}.
}
\begin{align}\label{uvwtau}
u_n=\partial_1\ln\tau_n, \quad v_n=\frac{\tau_{n+1}}{\tau_n} \quad \hbox{and} \quad w_n=\frac{\tau_{n-1}}{\tau_n}.
\end{align}
The transformations \eqref{uvwtau} help us to construct a closed-form bilinear system in terms of the $\tau$-function.
The result is presented as the theorem below.
\begin{theorem}
The $\tau$-function defined by \eqref{tau} satisfies the following system of bilinear equations:
\bse\label{EllKP:BL}
\begin{align}
&\left(\rD_1^4-4\rD_1\rD_3+3\rD_2^2\right)\tau_n\cdot\tau_n=24g\left(\tau_{n+1}\tau_{n-1}-\tau_n^2\right), \label{EllKP:BLa} \\
&\left(\rD_1^3+2\rD_3-3\rD_1\rD_2-18e\rD_1\right)\tau_{n+1}\cdot\tau_n=0, \label{EllKP:BLb} \\
&\left(\rD_1^3+2\rD_3+3\rD_1\rD_2-18e\rD_1\right)\tau_{n-1}\cdot\tau_n=0, \label{EllKP:BLc}
\end{align}
\ese
where $\rD_j$ stand for the bilinear derivatives (see appendix \ref{S:BL} for the definition) with respect to $x_j$.
\end{theorem}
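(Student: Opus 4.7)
The plan is a direct verification by substituting the transformations \eqref{uvwtau}---which themselves follow from specialising \eqref{tauDyn} to its simplest cases and invoking the antisymmetry \eqref{USym}---into each member of \eqref{EllKP:NL}, and converting the resulting logarithmic-derivative expressions into Hirota D-operator form via standard identities.

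For \eqref{EllKP:BLa}, I substitute $u_n=\partial_1\ln\tau_n$ into \eqref{EllKP:NLa}, so that $\partial_1^{-1}\partial_2^2 u_n$ collapses to $\partial_2^2\ln\tau_n$ while $v_nw_n=\tau_{n+1}\tau_{n-1}/\tau_n^2$. After rearrangement and multiplication by $2\tau_n^2$, application of the log-to-bilinear identities
\[
\frac{\rD_1\rD_3\tau_n\cdot\tau_n}{\tau_n^2}=2\partial_1\partial_3\ln\tau_n,\quad \frac{\rD_2^2\tau_n\cdot\tau_n}{\tau_n^2}=2\partial_2^2\ln\tau_n,\quad \frac{\rD_1^4\tau_n\cdot\tau_n}{\tau_n^2}=2\partial_1^4\ln\tau_n+12(\partial_1^2\ln\tau_n)^2
\]
assembles the left-hand side into $(\rD_1^4-4\rD_1\rD_3+3\rD_2^2)\tau_n\cdot\tau_n$, while the right-hand side becomes $24g(\tau_{n+1}\tau_{n-1}-\tau_n^2)$. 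The quartic identity follows by direct expansion of $\rD_x^4 f\cdot f=2ff_{xxxx}-8f_xf_{xxx}+6f_{xx}^2$ rewritten in terms of $\partial\ln f$.

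For \eqref{EllKP:BLb}, I substitute $v_n=\tau_{n+1}/\tau_n$ into \eqref{EllKP:NLb} and assemble the combination $(\rD_1^3+2\rD_3-3\rD_1\rD_2-18e\rD_1)\tau_{n+1}\cdot\tau_n$ divided by $\tau_n^2$ using the mixed-argument identities
\[
\frac{\rD_1\tau_{n+1}\cdot\tau_n}{\tau_n^2}=\partial_1 v_n,\quad \frac{\rD_3\tau_{n+1}\cdot\tau_n}{\tau_n^2}=\partial_3 v_n,
\]
\[
\frac{\rD_1\rD_2\tau_{n+1}\cdot\tau_n}{\tau_n^2}=\partial_1\partial_2 v_n+2v_n\partial_2 u_n,\quad \frac{\rD_1^3\tau_{n+1}\cdot\tau_n}{\tau_n^2}=\partial_1^3 v_n+6(\partial_1 u_n)(\partial_1 v_n),
\]
which specialise the general polynomial formula for $\rD_x^k f\cdot g/(fg)$ in the variables $\partial\ln(f/g)$ and $\partial\ln g$, combined with $\partial_1 u_n=\partial_1^2\ln\tau_n$. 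The cross-term $6(\partial_1 u_n)(\partial_1 v_n)$ from $\rD_1^3$ and the contribution $-18e\partial_1 v_n$ from $-18e\rD_1$ combine into $6(\partial_1 u_n-3e)\partial_1 v_n$, which matches the corresponding coefficient in \eqref{EllKP:NLb} after multiplying the latter by $-2$; the remaining terms pair up directly.

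Finally, \eqref{EllKP:BLc} will follow from the parallel computation with $w_n=\tau_{n-1}/\tau_n$ in place of $v_n=\tau_{n+1}/\tau_n$; the sign flips in \eqref{EllKP:NLc} relative to \eqref{EllKP:NLb}, namely $+\tfrac{3}{2}\partial_1\partial_2\to -\tfrac{3}{2}\partial_1\partial_2$ and $+3(\partial_2 u_n)v_n\to -3(\partial_2 u_n)w_n$, translate precisely into the sign change of the $\rD_1\rD_2$ coefficient from $-3$ to $+3$, which is the only distinction between \eqref{EllKP:BLb} and \eqref{EllKP:BLc}. The main obstacle in this plan is establishing the cubic identity for $\rD_1^3\tau_{n+1}\cdot\tau_n/\tau_n^2$: one must carefully bookkeep the $(\partial_1\ln v_n)^3$, $(\partial_1\ln v_n)(\partial_1^2\ln\tau_n)$ and $\partial_1^3\ln v_n$ pieces that arise when expanding $\rD_x^3 f\cdot g$, but this is a purely mechanical computation without any conceptual difficulty.
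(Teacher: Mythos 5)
Your proposal is correct and follows essentially the same route as the paper: substitute the transformations \eqref{uvwtau} into the nonlinear system \eqref{EllKP:NL} and convert via the logarithmic identities \eqref{Log} and the bi-logarithmic identities \eqref{BiLog}, and all of the mixed-argument identities you quote (e.g. $\rD_1^3\tau_{n+1}\cdot\tau_n/\tau_n^2=\partial_1^3v_n+6(\partial_1u_n)(\partial_1v_n)$) check out against the appendix formulae. The only nit is the stated overall factor for \eqref{EllKP:BLa} (the rearranged nonlinear equation must be multiplied by $8\tau_n^2$, not $2\tau_n^2$, to match $24g(\tau_{n+1}\tau_{n-1}-\tau_n^2)$), which does not affect the validity of the argument.
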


\begin{proof}
These equations are obtained from \eqref{EllKP:NL} by using the transformations in \eqref{uvwtau} and identities \eqref{Log} and \eqref{BiLog}.
For instance, the bilinear transformations \eqref{uvwtau} reformulate \eqref{EllKP:NLa} as
\begin{align*}
\partial_1\partial_3\ln\tau_n=\frac{1}{4}\partial_1^4\ln\tau_n+\frac{3}{2}(\partial_1^2\ln\tau_n)^2+\frac{3}{4}\partial_2^2\ln\tau_n
+3g\left(1-\frac{\tau_{n+1}\tau_{n-1}}{\tau_n^2}\right)
\end{align*}
in terms of the $\tau$-function, which is exactly the same as \eqref{EllKP:BLa}, due to the logarithmic transformations give by \eqref{Log}.
Equations \eqref{EllKP:BLb} and \eqref{EllKP:BLc} are derived similarly from \eqref{EllKP:NLb} and \eqref{EllKP:NLc}, respectively,
where we need to make use of the bi-logarithmic transformations listed in \eqref{BiLog}.
We also comment that these bilinear equations can alternatively be verified directly by following the same procedure of deriving \eqref{EllKP:NL}.
This is because equations \eqref{tauDyn} establish the connection between $\tau_n$ and $\bU_n^{[i,j]}$.
\end{proof}

\begin{remark}
In fact, the last two equations in the bilinear equations \eqref{EllKP:BL} are equivalent to each other, which means either of them can be omitted.
However, we would still like to reserve both equations because here the main idea is to present the bilinear elliptic coupled \ac{KP} system
as a $(2+1)$-dimensional continuous closed-form system for $\tau_n$, $\tau_{n+1}$ and $\tau_{n-1}$ in terms of the independent variables $x_1$, $x_2$ and $x_3$,
rather than a $(3+1)$-dimensional differential-difference system for a single $\tau$-function in terms of the independent variables $x_1$, $x_2$, $x_3$ and $n$.
We also note that equations in \eqref{EllKP:BL} are reparametrisation of those bilinear equations in \cite{DJM5},
which, up to the moduli $e$ and $g$, have appeared in \cite{Leu04} (see also references therein) as the first few members in the bilinear DKP hierarchy.
\end{remark}

In addition, by introducing the scalar wave function $\phi_n\doteq\bu_n^{[0]}(\kp)$,
we can also construct from \eqref{uDyn} the associated linear problem for the nonlinear system \eqref{EllKP:NL}.
We conclude the result as the following theorem.
\begin{theorem}
For an arbitrary solution $\bu_n(\kp)$ to the linear integral equation \eqref{Integral},
the scalar wave function $\phi_n$ and the variables $u_n$, $v_n$ and $w_n$ defined by \eqref{Component}
satisfy the linear system for $\phi_n$ as follows\footnote{
These equations are effectively reparametrisation of the linear equations listed in \cite{DJM5}.
We also remark that a typo in the third equation in \cite{DJM5} has been fixed here.
}:
\bse\label{EllKP:L}
\begin{align}
&\partial_2\phi_n=-\left[\partial_1^2+2\partial_1u_n-3e\right]\phi_n+2\sqrt{g}v_nw_n\phi_{n+1}, \label{EllKP:La} \\
&\partial_3\phi_n=\left[\partial_1^3+3(\partial_1u_n)\partial_1+\frac{3}{2}\left(\partial_1^2u_n-\partial_2u_n\right)\right]\phi_n-3\sqrt{g}v_n(\partial_1 w_n)\phi_{n+1}, \label{EllKP:Lb} \\
&\sqrt{g}v_nw_n\phi_{n+1}+\sqrt{g}\phi_{n-1} \nonumber \\
&\qquad=\left[\partial_1^2-(\partial_1\ln w_n)\partial_1+2\partial_1u_n-3e+\frac{1}{2}\left(\partial_1^2\ln w_n+(\partial_1\ln w_n)^2+\partial_2\ln w_n\right)\right]\phi_n. \label{EllKP:Lc}
\end{align}
\ese
\end{theorem}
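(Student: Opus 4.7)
The strategy is to derive each linear equation in \eqref{EllKP:L} by extracting the $(0)$-component of an appropriate infinite-vector evolution from Proposition \ref{P:uDyn}, and then to express the higher elliptic components $\bu_n^{[k]}$ that arise in terms of $\phi_n = \bu_n^{[0]}$, its $x_1$-derivatives, and the shifts $\phi_{n\pm 1}$, with coefficients drawn from $u_n$, $v_n$, $w_n$. This mirrors the approach used for the nonlinear system in the previous theorem, except that we now work with the vector relations \eqref{uDyn} in place of the matrix relations \eqref{UDyn}.

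The main step is to build a low-order dictionary between the elliptic components of $\bu_n$ and $\phi_n$. Taking the $(0)$-component of \eqref{uDyna} with $j=0$ (so that $\bO_1 = \bO = \be\,\tbe$) and using $\bU_n^{[0,0]}=0$ together with $\bU_n^{[0,2]} = -u_n$ yields the fundamental identification
\begin{align*}
\bu_n^{[1]} = (\partial_1 + u_n)\phi_n.
\end{align*}
The $(0)$-component of the shift \eqref{uDync}, combined with the consistency relation $v_n w_{n+1}=1$ implicit in \eqref{uvwtau}, gives $\bu_n^{[2]} = \sqrt{g}\,v_n\,\phi_{n+1}$. A further application of $\bLd$ to \eqref{uDyna}, together with the curve relation $\bLd^2 = \bL + 3e + g/\bL$ from \eqref{EllIndex}, expresses $g\bu_n^{[-2]}$ in terms of $\partial_1^2\phi_n$, $\phi_n$, $\phi_{n+1}$ and a residual matrix entry $\bU_n^{[1,2]}$; the latter is removed via the $(0,0)$-component of \eqref{UDyna}, which supplies $\bU_n^{[3,0]} = u_n^2 + \partial_1 u_n + \bU_n^{[1,2]}$. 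With these in hand, \eqref{EllKP:La} follows from the $(0)$-component of \eqref{uDynb} at $j=1$, which produces $\partial_2\phi_n = \bu_n^{[2]} - g\bu_n^{[-2]} + (w_n-1)\phi_n$; substitution of the dictionary collapses the right-hand side to the target form. Equation \eqref{EllKP:Lb} is obtained identically from \eqref{uDyna} with $j=1$, where $\bLd^3 = \bLd(\bL + 3e + g/\bL)$ now introduces $\bu_n^{[3]}$ and $\bu_n^{[-1]}$; these are handled by iterating the dictionary together with the analogous evolution identities for $\partial_2 u_n$ and $\partial_1 v_n$ extracted from \eqref{UDynb}. Equation \eqref{EllKP:Lc} arises by combining the $(0)$-component of \eqref{uDync} with its $n\to n-1$ counterpart to produce a three-point relation among $\phi_{n-1}$, $\phi_n$ and $\phi_{n+1}$; the logarithmic coefficients $\partial_1\ln w_n$, $\partial_1^2\ln w_n$ and $\partial_2\ln w_n$ are generated by invoking $w_n = \tau_{n-1}/\tau_n$ from \eqref{uvwtau} in tandem with the $\tau$-function evolutions \eqref{tauDyn}.

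The principal obstacle is the algebraic bookkeeping: each equation requires careful tracking of antisymmetry cancellations such as $\bU_n^{[i,i]}=0$, curve-relation reductions via \eqref{EllIndex}, and cross-substitutions between the dictionary entries and the potential-matrix evolutions in \eqref{UDyn} in order to eliminate spurious combinations such as $(1-w_n)\bu_n^{[2]}$ and $u_n\partial_1\phi_n$ that appear after the initial substitution. The vacuum limit $\bU_n=0$, under which $\phi_n = \rho_n(\kp)$ and all three linear equations collapse to the defining curve relation $k^2 = K + 3e + g/K$, provides a quick consistency check at each stage of the reduction, and also pinpoints the correct constant coefficients before the full nonlinear dependence on $u_n, v_n, w_n$ is reassembled.
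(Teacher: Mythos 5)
Your proposal follows essentially the same route as the paper: take the $[0]$-components of the vector evolutions in Proposition \ref{P:uDyn}, build a dictionary expressing $\bu_n^{[1]}$, $\bu_n^{[2]}$, $g\bu_n^{[-2]}$, etc.\ in terms of $\phi_n$, its $x_1$-derivatives and $\phi_{n\pm1}$ (your entries $\bu_n^{[1]}=(\partial_1+u_n)\phi_n$ and $\bu_n^{[2]}=\sqrt{g}\,v_n\phi_{n+1}$ agree with the paper's explicit relations), then close the system using the curve relation \eqref{EllIndex}, the antisymmetry \eqref{USym}, and the $[i,j]$-components of \eqref{UDyn}. The only blemish is the quoted intermediate identity for $\partial_2\phi_n$, which should read $\partial_2\bu_n^{[0]}=w_n\bu_n^{[2]}+u_n\bu_n^{[1]}+\bU_n^{[0,3]}\bu_n^{[0]}-g\bu_n^{[-2]}$ rather than the form you wrote; this is a transcription slip in one step of the same computation and does not affect the validity of the method.
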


\begin{proof}
The idea of the proof is very similar to that of verifying \eqref{EllKP:NL}.
We start with deriving from \eqref{uDyn} the algebraic expressions of the derivatives $\partial_{\mathfrak{j}}\bu_n^{[i]}(\kp)$.
Following the elliptic index labels introduced by \eqref{EllComponent}, we have
\begin{align*}
\partial_\mathfrak{j}\bu_n^{[2i]}(\kp)=\partial_\mathfrak{j}[\bL^i\bu_n(\kp)]^{(0)}=[\bL^i\partial_\mathfrak{j}\bu_n(\kp)]^{(0)}
\end{align*}
and
\begin{align*}
\partial_\mathfrak{j}\bu_n^{[2i+1]}(\kp)=\partial_\mathfrak{j}[\bLd\bL^i\bu_n(\kp)]^{(0)}=[\bLd\bL^i\partial_\mathfrak{j}\bu_n(\kp)]^{(0)},
\end{align*}
which can be further expressed by algebraic expressions of $\bU_n^{[i,j]}$ and $\bu_n^{[i]}$
by replacing $\partial_\mathfrak{j}\bu_n(\kp)$ with the help of \eqref{uDyna} and \eqref{uDynb}.
For instance, we have for $i=0$ the simplest relations as follows:
\begin{align*}
&\partial_1\bu_n^{[0]}(\kp)=\bu_n^{[1]}(\kp)+\bU_n^{[0,2]}\bu_n^{[0]}(\kp), \\
&\partial_2\bu_n^{[0]}(\kp)=\left(1+\bU_n^{[0,1]}\right)\bu_n^{[2]}(\kp)-\bU_n^{[0,2]}\bu_n^{[1]}(\kp)+\bU_n^{[0,3]}\bu_n^{[0]}(\kp)-g\bu_n^{[-2]}(\kp), \\
&\bu_{n+1}^{[0]}(\kp)=\frac{1}{\sqrt{g}}\left(1+\bU_{n+1}^{[0,1]}\right)\bu_n^{[2]}(\kp), \\
&\bu_{n-1}^{[0]}(\kp)=\frac{1}{\sqrt{g}}\bU_{n-1}^{[0,2]}\bu_n^{[1]}(\kp)-\frac{1}{\sqrt{g}}\bU_{n-1}^{[0,3]}\bu_n^{[0]}(\kp)+\sqrt{g}\bu_n^{[-2]}(\kp).
\end{align*}
The formulae for the higher-order derivatives of $\bu_n^{[i]}$, such as $\partial_{\mathfrak{j}}^2\bu_n^{(i)}$, $\partial_{\mathfrak{j}}^3\bu_n^{(i)}$, etc.
for $\mathfrak{j}\in\mathbb{Z}^+$ and $i,j\in\mathbb{Z}$ are obtained by iteration.
Following such an algorithm, we are able to verify the linear equations in \eqref{EllKP:L}.
For example, by replacing all the derivatives by algebraic expressions of $\bU_n^{[i,j]}$ and $\bu_n^{[i]}(\kp)$, direct calculation shows that
\begin{align*}
&\partial_2\bu_n^{[0]}(\kp)+\partial_1^2\bu_n^{[0]}(\kp)-2\sqrt{g}v_nw_n\bu_{n+1}^{[0]}(\kp) \\
&+2\left(\bU_n^{[3,0]}+\bU_n^{[2,1]}-\bU_n^{[2,0]}\bU_n^{[2,0]}+\bU_n^{[2,2]}\bU_n^{[0,0]}-\frac{3}{2}e\right)\bu_n^{[0]}(\kp)
\end{align*}
is identically zero in virtue of the antisymmetry condition $\bU_n^{[i,j]}=-\bU_n^{[j,i]}$
as well as the algebraic relations (which helps us to eliminate those shifted variables $\bU_{n+1}^{[i,j]}$ in terms of $n$)
that follow from taking $[i,j]$-labels of \eqref{UDync}.
Such an identity is essentially the linear equation \eqref{EllKP:La} once $\bu_n^{[0]}(\kp)$ and $\bU_n^{[i,j]}$ are expressed by $\phi_n$, $u_n$, $v_n$ and $w_n$.
Linear equations \eqref{EllKP:Lb} and \eqref{EllKP:Lc} are proven similarly.
\end{proof}

\begin{remark}
The third equation in \eqref{EllKP:L} allows us to rewrite the other two equations as a two-component linear system composed of
\begin{align}\label{EllKP:Lax}
\partial_2
\begin{pmatrix}
\phi_n \\
\phi_{n+1}
\end{pmatrix}
=
\bP
\begin{pmatrix}
\phi_n \\
\phi_{n+1}
\end{pmatrix}
\quad \hbox{and} \quad
\partial_3
\begin{pmatrix}
\phi_n \\
\phi_{n+1}
\end{pmatrix}
=
\bQ
\begin{pmatrix}
\phi_n \\
\phi_{n+1}
\end{pmatrix},
\end{align}
where $\bP$ and $\bQ$ are $2\times2$ matrix operators given by
\bse\label{EllKP:LaxMat}
\begin{align}
\bP=
\begin{pmatrix}
-\partial_1^2-2\partial_1u_n+3e & 2\sqrt{g}v_n w_n \\
-2\sqrt{g} & \partial_1^2+2(\partial_1\ln v_n)\partial_1+2\partial_1u_n-3e+\partial_1^2\ln v_n+(\partial_1\ln v_n)^2-\partial_2\ln v_n
\end{pmatrix}
\end{align}
and
\begin{align}
\bQ=
\begin{pmatrix}
\partial_1^3+3(\partial_1u_n)\partial_1+\displaystyle\frac{3}{2}\left(\partial_1^2u_n-\partial_2u_n\right) & -3\sqrt{g}v_n(\partial_1 w_n) \\
-3\sqrt{g}\partial_1\ln v_n & *
\end{pmatrix},
\end{align}
\ese
respectively, in which
\begin{align*}
*={}&\partial_1^3+3(\partial_1\ln v_n)\partial_1^2+3\left(\partial_1u_n+\partial_1^2\ln v_n+(\partial_1\ln v_n)^2\right)\partial_1 \\
&+\frac{3}{2}\left(\partial_1^2u_n-\partial_2u_n\right)+3(\partial_1\ln v_n)(2\partial_1u_n-3e) \\
&+\frac{3}{2}(\partial_1+\partial_1\ln v_n)\left(\partial_1^2\ln v_n+(\partial_1\ln v_n)^2-\partial_2\ln v_n\right).
\end{align*}
The linear equations in \eqref{EllKP:Lax} form the Lax pair for the elliptic coupled \ac{KP} system, namely the zero curvature equation
\begin{align*}
\partial_3\bP-\partial_2\bQ+[\bP,\bQ]=0
\end{align*}
with $[\bP,\bQ]\doteq \bP\bQ-\bQ\bP$ yields the nonlinear system \eqref{EllKP:NL}.
\end{remark}

\section{Dimensional reductions}\label{S:Reduc}

In this section, we further discuss the relevant $(1+1)$-dimensional elliptic integrable models
by performing dimensional reductions on the elliptic coupled \ac{KP} system \eqref{EllKP:NL}.
This is realised by imposing a restriction of the form $\mathcal{F}(\kp,\kp')=0$ on the spectral variables in the linear integral equation \eqref{Integral},
such that the effective plane wave factor $\rho_n(\kp)\rho_n(\kp')$ turns out to be independent of a certain argument $x_j$.
Recall that the dynamics of the elliptic coupled \ac{KP} system completely relies on the effective plane wave factor
\begin{align}\label{EffPWF}
\rho_n(\kp)\rho_n(\kp')=\exp\left\{\sum_{j=0}^\infty\left(k^{2j+1}+k'^{2j+1}\right)x_{2j+1}
+\sum_{j=1}^\infty\left[K^j-\left(\frac{g}{K}\right)^j+K'^j-\left(\frac{g}{K'}\right)^j\right]x_{2j}\right\}
\left(\frac{KK'}{g}\right)^n.
\end{align}
Thus, we can impose
\begin{align*}
k^{2j_0+1}+k'^{2j_0+1}=0 \quad \Rightarrow \quad k+k'=0 \quad \hbox{or} \quad \sum_{i=0}^{2j_0}k^{2j_0-i}(-k')^i=0
\end{align*}
and
\begin{align*}
K^{j_0}-\left(\frac{g}{K}\right)^{j_0}+K'^{j_0}-\left(\frac{g}{K'}\right)^{j_0}=0 \quad \Rightarrow \quad
\left(\frac{KK'}{g}\right)^{j_0}=1 \quad \hbox{or} \quad K^{j_0}+K'^{j_0}=0,
\end{align*}
in order to realise $x_{2j_0+1}$- and $x_{2j_0}$-independence, respectively.
But in practice, we are only allowed to set
\begin{align}\label{Reduc:Odd}
\mathcal{F}_{2j_0+1}(\kp,\kp')\doteq\sum_{i=0}^{2j_0}k^{2j_0-i}(-k')^i=0
\end{align}
and
\begin{align}\label{Reduc:Even}
\mathcal{F}_{2j_0}(\kp,\kp')\doteq K^{j_0}+K'^{j_0}=0
\end{align}
to respectively perform $x_{2j_0+1}$- and $x_{2j_0}$-reductions.
This is because either $k+k'=0$ or $(KK'/g)^{j_0}=1$ will lead to trivialities;
to be more precise, useful independent variables are also eliminated in these two cases.
Below we shall give the simplest examples including the elliptic coupled \ac{KdV} and \ac{BSQ} systems
arising from the $x_2$- and $x_3$-reductions of \eqref{EllKP:NL}.

To construct the elliptic coupled \ac{KdV} system, we set
\begin{align*}
K+K'=0,
\end{align*}
namely the $j_0=1$ case of \eqref{Reduc:Even} leading to the $x_2$-independence.
This in turn implies
\begin{align*}
k^2+k'^2=6e
\end{align*}
due to the elliptic curve relation \eqref{Curve}.
The constraint on the spectral points $(k,K)$ and $(k',K')$ results in $\partial_{4j+2}\rho_n(\kp)\rho_n(\kp')=0$.
From the definition of $\bC_n$, i.e. \eqref{C}, we can easily prove $\partial_{4j+2}\bC_n=0$ and subsequently $\partial_{4j+2}\bU_n=0$ by \eqref{U}.
Recall that the $\tau$-function and the potentials are defined as \eqref{tau} and \eqref{Component}. We obtain reduction conditions
\begin{align*}
\partial_{4j+2}\tau_n=\partial_{4j+2}u_n=\partial_{4j+2}v_n=\partial_{4j+2}w_n=0,
\end{align*}
for $j=0,1,2,\cdots$.
Performing such reductions on the elliptic coupled \ac{KP} system \eqref{EllKP:NL}, we obtain a coupled system
\bse\label{EllKdV:NL}
\begin{align}
&\partial_3u_n=\frac{1}{4}\partial_1^3u_n+\frac{3}{2}(\partial_1u_n)^2+3g(1-v_nw_n), \\
&\partial_3v_n=-\frac{1}{2}\partial_1^3v_n-3(\partial_1u_n-3e)\partial_1v_n, \\
&\partial_3w_n=-\frac{1}{2}\partial_1^3w_n-3(\partial_1u_n-3e)\partial_1w_n,
\end{align}
\ese
which we shall refer to as the elliptic coupled \ac{KdV} system.
The bilinear form of equation \eqref{EllKdV:NL} is obtained from \eqref{EllKP:BL},
which is a multi-component system given by
\bse\label{EllKdV:BL}
\begin{align}
&\left(\rD_1^4-4\rD_1\rD_3\right)\tau_n\cdot\tau_n=24g\left(\tau_{n+1}\tau_{n-1}-\tau_n^2\right), \\
&\left(\rD_1^3+2\rD_3-18e\rD_1\right)\tau_{n+1}\cdot\tau_n=0, \\
&\left(\rD_1^3+2\rD_3-18e\rD_1\right)\tau_{n-1}\cdot\tau_n=0.
\end{align}
\ese
Notice that $\partial_{4j+2}\bU_n=0$ and $\bu(\kp)$ satisfies \eqref{u}. We can further derive
\begin{align*}
\partial_{4j+2}\phi_n=\left(K^{2j+1}-(g/K)^{2j+1}\right)\phi_n.
\end{align*}
This provides us with a reduction on the Lax pair \eqref{EllKP:Lax}.
Therefore, the Lax pair of the elliptic coupled \ac{KdV} system \eqref{EllKdV:NL} is composed of
\begin{align}\label{EllKdV:Lax}
\bP
\begin{pmatrix}
\phi_n \\
\phi_{n+1}
\end{pmatrix}
=
(K-g/K)
\begin{pmatrix}
\phi_n \\
\phi_{n+1}
\end{pmatrix}
\quad \hbox{and} \quad
\partial_3
\begin{pmatrix}
\phi_n \\
\phi_{n+1}
\end{pmatrix}
=
\bQ
\begin{pmatrix}
\phi_n \\
\phi_{n+1}
\end{pmatrix}.
\end{align}
Here the Lax matrices are given by
\bse\label{EllKdV:LaxMat}
\begin{align}
\bP=
\begin{pmatrix}
-\partial_1^2-2\partial_1u_n+3e & 2\sqrt{g}v_n w_n \\
-2\sqrt{g} & \partial_1^2+2(\partial_1\ln v_n)\partial_1+2\partial_1u_n-3e+\partial_1^2\ln v_n+(\partial_1\ln v_n)^2
\end{pmatrix}
\end{align}
and
\begin{align}
\bQ=
\begin{pmatrix}
\partial_1^3+3(\partial_1u_n)\partial_1+\displaystyle\frac{3}{2}\left(\partial_1^2u_n-\partial_2u_n\right) & -3\sqrt{g}v_n(\partial_1 w_n) \\
-3\sqrt{g}\partial_1\ln v_n & *
\end{pmatrix},
\end{align}
\ese
respectively, in which
\begin{align*}
*={}&\partial_1^3+3(\partial_1\ln v_n)\partial_1^2+3\left(\partial_1u_n+\partial_1^2\ln v_n+(\partial_1\ln v_n)^2\right)\partial_1 \\
&+\frac{3}{2}\partial_1^2u_n+3(\partial_1\ln v_n)(2\partial_1u_n-3e)+\frac{3}{2}(\partial_1+\partial_1\ln v_n)\left(\partial_1^2\ln v_n+(\partial_1\ln v_n)^2\right).
\end{align*}

For the elliptic coupled \ac{BSQ} system, we set $j_0=1$ in \eqref{Reduc:Odd}, namely
\begin{align*}
k^2-kk'+k'^2=0,
\end{align*}
in order to induce the $x_3$-independence.
This simultaneously leads to the fact that $K$ and $K'$ also obey a restriction in the form of
\begin{align*}
\left(K+3e+\frac{g}{K}\right)^2+\left(K+3e+\frac{g}{K}\right)\left(K'+3e+\frac{g}{K'}\right)+\left(K'+3e+\frac{g}{K'}\right)^2=0,
\end{align*}
which follows form the curve relation \eqref{Curve}.
In this case, the reduction conditions are given by
\begin{align*}
\partial_{3j}\tau_n=\partial_{3j}u_n=\partial_{3j}v_n=\partial_{3j}w_n=0 \quad \hbox{as well as} \quad
\partial_{3j}\phi_n=k^{3j}\phi_n
\end{align*}
for $j\in\mathbb{Z}^+$. Therefore, we obtain from \eqref{EllKP:NL} the elliptic coupled \ac{BSQ} system
\bse\label{EllBSQ:NL}
\begin{align}
&\partial_2^2u_n=-\frac{1}{3}\partial_1^4u_n-4(\partial_1u_n)\partial_1^2u_n+4g\partial_1(v_nw_n), \\
&\partial_1\partial_2v_n=\frac{1}{3}\partial_1^3v_n+2(\partial_1u_n-3e)\partial_1v_n-2(\partial_2u_n)v_n, \\
&\partial_1\partial_2w_n=-\frac{1}{3}\partial_1^3w_n-2(\partial_1u_n-3e)\partial_1w_n-2(\partial_2u_n)w_n.
\end{align}
\ese
The corresponding bilinear form is a consequence of the \eqref{EllKP:BL}, taking the form of the following:
\bse\label{EllBSQ:BL}
\begin{align}
&\left(\rD_1^4+3\rD_2^2\right)\tau_n\cdot\tau_n=24g\left(\tau_{n+1}\tau_{n-1}-\tau_n^2\right), \\
&\left(\rD_1^3-3\rD_1\rD_2-18e\rD_1\right)\tau_{n+1}\cdot\tau_n=0, \\
&\left(\rD_1^3+3\rD_1\rD_2-18e\rD_1\right)\tau_{n-1}\cdot\tau_n=0.
\end{align}
\ese
The Lax pair of the elliptic coupled \ac{BSQ} equation \eqref{EllBSQ:NL} is composed of
\begin{align}\label{EllBSQ:Lax}
\bQ
\begin{pmatrix}
\phi_n \\
\phi_{n+1}
\end{pmatrix}
=k^3
\begin{pmatrix}
\phi_n \\
\phi_{n+1}
\end{pmatrix}
\quad \hbox{and} \quad
\partial_2
\begin{pmatrix}
\phi_n \\
\phi_{n+1}
\end{pmatrix}
=
\bP
\begin{pmatrix}
\phi_n \\
\phi_{n+1}
\end{pmatrix},
\end{align}
where $\bP$ and $\bQ$ are the same as the ones given by \eqref{EllKP:LaxMat}.

We note that the third equations in \eqref{EllKdV:BL} and \eqref{EllBSQ:BL} can be omitted, since they are equivalent to their respective second equations.
But here we still reserve these equations in order to treat both \eqref{EllKdV:BL} and \eqref{EllBSQ:BL}
as $(1+1)$-dimensional three-component partial-differential systems of $\tau_n$, $\tau_{n+1}$ and $\tau_{n-1}$,
instead of $(2+1)$-dimensional differential-difference systems.

\section{Multi-soliton solutions to the elliptic coupled KP system}\label{S:Sol}

To further explore the integrability of the elliptic coupled \ac{KP} system,
we now specify a class of concrete solutions (i.e. elliptic multi-soliton solutions) to \eqref{EllKP:NL}.
They are constructed directly from the \ac{DL} framework by specifying a discrete measure
associated with distinct simple poles at values $\kp_\nu$ for $\nu=1,\dots,N$ of the uniformising spectral parameter,
associated with distinct points $(k_\nu,K_\nu)$ on the elliptic curve \eqref{Curve}.
This will lead the linear integral equation \eqref{Integral} to%
\begin{align}\label{DisIntegral}
\bu_n(\kp)+\sum_{\nu,\nu'=1}^N c_{\nu,\nu'}\rho_n(\kp)\Og(\kp,\kp_{\nu'})\rho_n(\kp_{\nu'})\bu_n(\kp_\nu)=\rho_n(\kp)\bc(\kp),
\end{align}
where the coefficients $c_{\nu,\nu'}$ are skew-symmetric in terms of the indices, i.e. $c_{\nu,\nu'}=-c_{\nu',\nu}$.
Let us introduce the finite matrices $\bM=(M_{\mu,\nu})_{N\times N}$ and $\bR=(R_{\mu,\nu})_{N\times N}$ whose respective entries are defined as
\begin{align}\label{M}
M_{\mu,\nu}\doteq\Og(\kp_\mu,\kp_\nu)=\frac{K_\mu-K_\nu}{k_\mu+k_\nu}=\frac{k_\mu-k_\nu}{1-\displaystyle\frac{g}{K_\mu K_\nu}}
\end{align}
and
\begin{align}\label{R}
R_{\mu,\nu}\doteq{}&c_{\nu,\mu}\rho_n(\kp_\nu)\rho_n(\kp_\mu) \nonumber \\
={}&c_{\nu,\mu}\exp\left\{\sum_{j=0}^\infty\left(k_\mu^{2j+1}+k_\nu^{2j+1}\right)x_{2j+1}
+\sum_{j=1}^\infty\left[K_\mu^j-\left(\frac{g}{K_\mu}\right)^j+K_\nu^j-\left(\frac{g}{K_\nu}\right)^j\right]x_{2j}\right\}
\left(\frac{K_\mu K_\nu}{g}\right)^n.
\end{align}
Equations \eqref{M} and \eqref{R} show that both $\bM$ and $\bR$ are skew-symmetric.
By setting $\kp=\kp_\mu$ for $\mu=1,2,\cdots,N$, we can rewrite \eqref{DisIntegral} as
\begin{align*}
\bphi_{\mu}+\sum_{\nu,\nu'=1}^N M_{\mu,\nu'}R_{\nu',\nu}\bphi_\nu=\bc(\kp_\mu),
\end{align*}
for the infinite vector $\bphi_\mu\doteq\bu_n(\kp_\mu)/\rho_n(\kp_\mu)$,
which can, in turn, be reformulated as a linear problem for an $N$-component block vector $(\bphi_1,\bphi_2,\cdots,\bphi_N)$ given by
\begin{align*}
(\bI+\bM\bR)
\begin{pmatrix}
\bphi_1 \\
\bphi_2 \\
\vdots \\
\bphi_N
\end{pmatrix}
=
\begin{pmatrix}
\bc(\kp_1) \\
\bc(\kp_2) \\
\vdots \\
\bc(\kp_N)
\end{pmatrix},
\end{align*}
namely
\begin{align}\label{MatIntegral}
(\bphi_1,\bphi_2,\cdots,\bphi_N)=(\bc(\kp_1),\bc(\kp_2),\cdots,\bc(\kp_N))(\bI+\bR\bM)^{-1},
\end{align}
where $\bI=\bI_{N\times N}$ denotes the $N\times N$ identity matrix.
Meanwhile, the discrete measure also reduces \eqref{Potential} to
\begin{align}\label{MatPotential}
\bU_n&=\sum_{\mu,\mu'=1}^N c_{\mu,\mu'}\bu_n(\kp_\mu)\tbc(\kp_{\mu'})\rho_n(\kp_{\mu'})
=\sum_{\mu,\mu'=1}^N \bphi_\mu R_{\mu',\mu}\tbc(\kp_{\mu'}) \nonumber \\
&=-(\bphi_1,\bphi_2,\cdots,\bphi_N)\bR
\begin{pmatrix}
\tbc(\kp_1) \\
\tbc(\kp_2) \\
\vdots \\
\tbc(\kp_N)
\end{pmatrix}.
\end{align}
Equation \eqref{MatPotential} together with \eqref{MatIntegral} allows us to write down the formula of $\bU_n$ for the elliptic multi-soliton solutions as follows:
\begin{align}\label{Sol:U}
\bU_n=-(\bc(\kp_1),\bc(\kp_2),\cdots,\bc(\kp_N))(\bI+\bR\bM)^{-1}\bR
\begin{pmatrix}
\tbc(\kp_1) \\
\tbc(\kp_2) \\
\vdots \\
\tbc(\kp_N)
\end{pmatrix}.
\end{align}
A remark here is that the multiplication in \eqref{Sol:U} is essentially based on $N\times N$ finite matrices $\bI$, $\bR$ and $\bM$
and $N$-component block vectors with their respective components being infinite vectors $\bc(\kp_i)$ and $\tbc(\kp_i)$.
Thus, the final result is an infinite matrix, which coincides with the form of \eqref{Potential}, as we expect.
Notice that the variables $u_n$, $v_n$ and $w_n$ of the elliptic coupled \ac{KP} system purely rely on the entries of $\bU_n$, see \eqref{Component}.
We can therefore derive the general formulae for the Cauchy matrix solutions to \eqref{EllKP:NL}.
\begin{theorem}
The elliptic coupled \ac{KP} system \eqref{EllKP:NL} possesses Cauchy matrix solutions
\bse\label{Sol:NL}
\begin{align}
&u_n=\tbk_2(\bI+\bR\bM)^{-1}\bR\,\bk_0, \\
&v_n=1-g^{-1}\,\tbk_3(\bI+\bR\bM)^{-1}\bR\,\bk_2, \\
&w_n=1+\tbk_1(\bI+\bR\bM)^{-1}\bk_0,
\end{align}
\ese
where $\bM$ and $\bR$ are $N\times N$ matrices introduced in \eqref{M} and \eqref{R},
and $\bk_i$ for $i=0,1,2,3$ are $N$-component column vectors defined as
\begin{align*}
&\bk_0={}^{t\!}(1,1,\cdots,1), \quad \bk_1={}^{t\!}(k_1,k_2,\cdots,k_N), \\
&\bk_2={}^{t\!}(K_1,K_2,\cdots,K_N) \quad \hbox{and} \quad \bk_3={}^{t\!}(k_1K_1,k_2K_2,\cdots,k_NK_N).
\end{align*}
\end{theorem}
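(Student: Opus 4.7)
The plan is to reduce the analysis to the discrete-measure case of the direct-linearisation scheme and then to unwrap the explicit formula \eqref{Sol:U} for $\bU_n$ entry-by-entry. The choice of the discrete antisymmetric measure corresponding to coefficients $c_{\nu,\nu'}=-c_{\nu',\nu}$ supported at $\kp_1,\ldots,\kp_N$ clearly obeys the antisymmetry condition \eqref{Measure}; hence by the theorem of section \ref{S:EllKP} the triple $(u_n,v_n,w_n)$ built from the resulting $\bU_n$ via \eqref{Component} automatically solves \eqref{EllKP:NL}. What remains is purely computational: to evaluate $\bU_n^{[2,0]}$, $\bU_n^{[3,2]}$ and $\bU_n^{[1,0]}$ using \eqref{Sol:U} and the definition \eqref{EllEntry} of the elliptic index labels, and to match them against the claimed finite-matrix-sandwich expressions.

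The concrete mechanism of this unwinding relies on the diagonalisation of the elliptic index-raising operators on the basis vectors, namely $\bLd\bc(\kp_i)=k_i\bc(\kp_i)$, $\bL\bc(\kp_i)=K_i\bc(\kp_i)$, together with the analogous right actions $\tbc(\kp_i)\tbLd=k_i\tbc(\kp_i)$, $\tbc(\kp_i)\tbL=K_i\tbc(\kp_i)$ and the normalisation $\bc(\kp_i)^{(0)}=\tbc(\kp_i)^{(0)}=1$. When an operator $\bLd^a\bL^b$ (respectively $\tbL^b\tbLd^a$) acts on the block row $(\bc(\kp_1),\ldots,\bc(\kp_N))$ in \eqref{Sol:U}, it multiplies the $i$-th slot by $k_i^a K_i^b$ (resp.\ on the column side by the same scalar), and extracting the $(0,0)$-component contracts the block rows and columns into the finite $N$-component vectors $\bk_0,\bk_1,\bk_2,\bk_3$ listed in the theorem. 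For instance $u_n=(\bL\bU_n)^{(0,0)}$ produces a sandwich of the shape $\tbk_2\,(\bI+\bR\bM)^{-1}\bR\,\bk_0$; $g^{-1}(\bLd\bL\bU_n\tbL)^{(0,0)}$ yields one between $\bk_3$ and $\bk_2$ giving $v_n-1$; and $-(\bLd\bU_n)^{(0,0)}$ gives one between $\bk_1$ and $\bk_0$ producing $w_n-1$.

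No essential obstacle arises; the whole verification is algebraic bookkeeping. The main care required lies in tracking signs produced by the skew-symmetries $\tbR=-\bR$, $\tbM=-\bM$, and, where useful, in exploiting the derived skew-symmetry ${}^t[(\bI+\bR\bM)^{-1}\bR]=-(\bI+\bR\bM)^{-1}\bR$ that follows via the standard commutation identity $(\bI+\bR\bM)^{-1}\bR=\bR(\bI+\bM\bR)^{-1}$. This antisymmetry is precisely what allows the two equivalent expressions in \eqref{Component} (for example $u_n=\bU_n^{[2,0]}\equiv-\bU_n^{[0,2]}$) to yield the same scalar, in harmony with the matrix-level antisymmetry property \eqref{USym} established in proposition \ref{P:USym}; any apparent sign mismatch between the two routes is reconciled by this identity.
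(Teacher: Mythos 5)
Your proposal is correct and is essentially the paper's own (implicit) derivation of this theorem: specialise the measure to the discrete antisymmetric one, obtain the finite-matrix form \eqref{Sol:U} of $\bU_n$, and contract the elliptic index labels $[2,0]$, $[3,2]$, $[1,0]$ through the eigenvalue relations \eqref{c} into the finite vectors $\bk_0,\dots,\bk_3$. The only delicate point is exactly the sign bookkeeping you flag via the antisymmetry of $(\bI+\bR\bM)^{-1}\bR=\bR(\bI+\bM\bR)^{-1}$, which reconciles the two equivalent expressions in \eqref{Component} (and which a careful reader will want to carry out explicitly, since it is where the relative signs in front of the three sandwiches in \eqref{Sol:NL} come from).
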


For these solutions, the square of the $\tau$-function \eqref{tau} is of the form
\begin{align}\label{tauDet}
\tau_n^2=\det\left(\bI+\bM\bR\right)=\det\left(\bI+\bR\bM\right)=
\det
\left(
\begin{array}{c:c}
\bM & \bI \\
\hdashline
-\bI & \bR
\end{array}
\right).
\end{align}
Since both $\bM$ and $\bR$ are skew-symmetric matrices the latter $N\times N$ determinant is a square of a $(2N-1)\times(2N-1)$ Pfaffian,
and hence the $\tau$-function itself can be written as a Pfaffian\footnote{
We refer the reader to \cite{Mui04} for such a notation (i.e. the triangular array) of Pfaffian.
Here the postfactor is chosen to comply with a normalisation of the $\tau$-function such that it has the form $\tau_n=1+\textrm{perturbation}$.
}
that
\begin{align}\label{Sol:BL}
\tau_n&=\pf(\bM|\bR) \nonumber \\
&\doteq
\left.\begin{array}{ccccccccc}
|\,M_{1,2} & M_{1,3} & \cdots & M_{1,N} & 1 & 0 & \cdots & 0 & 0 \\
& M_{2,3} & \cdots & M_{2,N} & 0 & 1 & \cdots & 0 & 0 \\
& & \ddots & \vdots & \vdots & \vdots & \ddots & \vdots & \vdots \\
& & & M_{N-1,N} & 0 & 0 & \cdots & 1 & 0 \\
& & & & 0 & 0 & \cdots & 0 & 1 \\
& & & & & R_{1,2} & \cdots & R_{1,N-1} & R_{1,N} \\
& & & & & & \ddots & \vdots & \vdots \\
& & & & & & & R_{N-2,N-1} & R_{N-2,N} \\
& & & & & & & & R_{N-1,N}
\end{array}
\right|(-1)^{N(N-1)/2}.
\end{align}

In order to give an explicit expression for this Pfaffian (a definition is given in appendix \ref{S:Pfaff}),
we need an expansion formula for the Pfaffian,
similar to the expansion for a determinant of the form $\det(\boldsymbol{1}+\bM\bR)$ in terms of the matrix invariants of the matrix $\bM\bR$,
but now in terms of Pfaffians of $\bM$ and $\bR$ separately.
Such an expansion formula is given by \eqref{PfaffExp} in appendix \ref{S:Pfaff}.
Furthermore, we can make use of the lemma below which amounts to a Pfaffian analogue of the Frobenius (i.e. elliptic Cauchy, cf. \cite{Fro82}) determinant formula.
\begin{lemma}
The Pfaffian of an elliptic Cauchy matrix $\bM$ with entries $M_{i,j}=\frac{K_i-K_j}{k_i+k_j}$ for $i,j=1,2,\dots,m$,
where $(k_i,K_i)$ are distinct points on the elliptic curve \eqref{Curve}, is given by
\begin{align}\label{EllPhase}
\pf(\bM)=\frac{g^{m(m-2)/8}}{\left(\displaystyle\prod_{i=1}^m\,K_{i}\right)^{(m-2)/2}}\prod_{1\leq i<j\leq m}\frac{K_{i}-K_{j}}{k_{i}+k_{j}},
\end{align}
for $m$ is even, while the Pfaffian vanishes (by definition) for $m$ is odd.
\end{lemma}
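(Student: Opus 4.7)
The matrix $\bM$ is skew-symmetric by construction, since $M_{j,i}=(K_j-K_i)/(k_j+k_i)=-M_{i,j}$, so $\pf(\bM)=0$ for odd $m$ by the very definition of the Pfaffian, matching the claim. It remains to establish \eqref{EllPhase} for $m$ even.

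My approach is to exploit the identity $\det(\bM)=\pf(\bM)^{2}$, valid for any even-size skew-symmetric matrix, and to invoke the Frobenius elliptic Cauchy determinant formula reviewed in appendix \ref{S:Cauchy} (the classical reference being \cite{DJKM83}). Squaring the right-hand side of \eqref{EllPhase} yields
\begin{align*}
\frac{g^{m(m-2)/4}}{\left(\prod_{i=1}^{m}K_{i}\right)^{m-2}}\prod_{1\le i<j\le m}\left(\frac{K_{i}-K_{j}}{k_{i}+k_{j}}\right)^{2},
\end{align*}
which should coincide with $\det(\bM)$ as delivered by the Frobenius formula. A convenient way to carry this out is to view $\bM$ as the diagonal limit of the asymmetric elliptic Cauchy matrix with entries $(K_{i}-K'_{j})/(k_{i}+k'_{j})$, letting $(k'_{j},K'_{j})\to(k_{j},K_{j})$; the leading nontrivial order in this degeneration produces $\det(\bM)$ and matches the squared expression above. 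The identity $\det(\bM)=\pf(\bM)^{2}$ then pins $\pf(\bM)$ down up to a global sign.

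The sign is fixed by induction on even $m$. The base case $m=2$ gives $\pf(\bM)=M_{1,2}=(K_{1}-K_{2})/(k_{1}+k_{2})$, in exact agreement with \eqref{EllPhase} since the prefactor reduces to $1$. For the inductive step I would use the Pfaffian Laplace expansion along the first row (appendix \ref{S:Pfaff}),
\begin{align*}
\pf(\bM)=\sum_{j=2}^{m}(-1)^{j}M_{1,j}\,\pf\bigl(\bM^{(\hat{1}\hat{\jmath})}\bigr),
\end{align*}
where $\bM^{(\hat{1}\hat{\jmath})}$ is the $(m-2)\times(m-2)$ elliptic Cauchy submatrix obtained by removing the first and $j$-th rows and columns, so the inductive hypothesis applies to each cofactor. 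Dividing through by $\prod_{2\le i<j\le m}(K_{i}-K_{j})/(k_{i}+k_{j})$ reduces the equality of the two sides to a partial-fraction identity on the elliptic curve \eqref{Curve}, whose correctness can be verified by matching poles as a meromorphic function of one uniformising variable $\kappa_{m}$ with the others fixed and then comparing residues.

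The main obstacle will be the clean comparison of the Frobenius-type output with the square of \eqref{EllPhase}: the exponent $m(m-2)/8$ of $g$ and the exponent $(m-2)/2$ in the denominator arise nontrivially from the interplay between the curve relation $k^{2}=K+3e+g/K$ and the involution $\kappa\mapsto-\kappa$ (under which $k\mapsto-k$, $K\mapsto K$), and require careful bookkeeping of which factors survive the degeneration $(k'_{j},K'_{j})\to(k_{j},K_{j})$. The sign-tracking in the inductive step is the other delicate point, but once the base case is in hand the Pfaffian expansion automatically transports the correct sign from $m-2$ to $m$.
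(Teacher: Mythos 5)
Your overall strategy---evaluate $\det(\bM)$ by the Frobenius formula, use $\det(\bM)=\pf(\bM)^2$, and fix the sign separately---is exactly the route the paper takes in appendix \ref{S:Cauchy}. The problem is that the step you set aside as ``careful bookkeeping'' is the entire content of the proof, and the device you propose for reaching the Frobenius formula does not do the job. The matrix with entries $(K_i-K'_j)/(k_i+k'_j)$ is \emph{not} of the Frobenius form $\Phi_\alpha(\kp_i-\kp'_j)$: by \eqref{Key} one has $(K-K')/(k+k')=W(\kp)W(\kp')/W(\kp-\kp')$, so the Lam\'e function sits in the \emph{denominator}. The essential trick, absent from your sketch, is the half-period identity \eqref{WPa}, $W(x)W(x+\og)=-\re^{\eta\og}/\sg^2(\og)$, which converts $1/W(\kp-\kp')$ into a constant times $W(\kp-\kp'+\og)$ and puts the kernel in the form \eqref{EllCauchy}; only then does Frobenius apply, with $\alpha=\og$ and $\kp'_j=\kp_j-\og$. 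No diagonal limit is needed, and none would help: the diagonal entries of $\bM$ are simply $0$, nothing blows up or degenerates, so there is no ``leading nontrivial order'' to extract. Moreover, the exponents $g^{m(m-2)/8}$ and $\left(\prod_iK_i\right)^{(m-2)/2}$ do not come from the involution $\kp\mapsto-\kp$; they come from two computations you never perform: (i) the Frobenius prefactor $\sg((m+1)\og)/\sg^{(m+1)^2}(\og)$ must be shown to vanish for $m$ odd (since $(m+1)\og$ is then a lattice point) and to equal $(-1)^{m/2}g^{m(m+2)/4}$ for $m$ even, which the paper proves by induction on $m$ using quasi-periodicity of $\sg$ and \eqref{WP}; and (ii) the row and column factors $W(\kp_i)$, $W(\kp_i+\og)$ must be converted into powers of $K_i$ via $W^2(\kp)=\wp(\kp)-e=K$ from \eqref{WRela} and $K\,(\wp(\kp+\og)-e)=g$ from \eqref{WPb}. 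Without (i) and (ii) the right-hand side of \eqref{EllPhase} is not derived, only guessed.

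Your sign-fixing induction is a reasonable supplement (the paper is terse on this point, passing from \eqref{OgFrob} to \eqref{CauchyPfaff} by simply declaring the square root), and the base case $m=2$ is correct. But as written the inductive step reduces to an elliptic partial-fraction identity that you do not verify---and if you \emph{did} verify that residue identity, the induction alone would prove the lemma and the determinant computation would be superfluous, so the proposal as it stands is two half-arguments rather than one whole one. Once $\pf(\bM)^2$ is actually known, the cheaper way to pin the sign is to note that both $\pf(\bM)$ and the right-hand side of \eqref{EllPhase} depend continuously on the points and never vanish on the (connected) configuration space of distinct points, so their ratio $\pm1$ is constant and can be checked at a single convenient configuration.
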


\begin{remark}
Essentially formula \eqref{EllPhase} in the lemma has appeared in \cite{DJKM83} without a proof.
A proof based on the Frobenius determinant formula for the elliptic Cauchy matrix is given in appendix \ref{S:Cauchy}.
\end{remark}

\noindent
Together with the expansion formula \eqref{PfaffExp} this allows us to write an explicit `Hirota-type' formula\footnote{
Note that taking the square root of the corresponding expansion of the determinant expansion does not necessarily lead to a finite expansion formula,
while the Pfaffian analogue \eqref{PfaffExp} does provide one.
}
for the elliptic $N$-soliton solution of the elliptic coupled \ac{KP} system.
Combining the results of appendices \ref{S:Pfaff} and \ref{S:Cauchy}, we thus obtain the following explicit expression
for the $\tau$-function of the $N$-soliton solution to the elliptic coupled \ac{KP} system.
In fact, the expansion formula \eqref{PfaffExp} gives a finite sum of terms,
each of which is a product of individual sub-Pfaffians of the matrices $\bM$ and $\bR$.
The result is given in the following theorem.
\begin{theorem}
The $\tau$-function for the $N$-soliton solution to the bilinear elliptic coupled KP system \eqref{EllKP:BL} is given by
\begin{align}\label{tauExp}
\tau_n=1+\sum_{m\in J}\,\sum_{1\leq i_1<i_2<\cdots<i_m\leq N}
\frac{g^{m(m-2)/8}}{\left(\displaystyle\prod_{\nu=1}^m\,K_{i_\nu}\right)^{(m-2)/2}}\left(\prod_{1\leq\nu<\nu'\leq m}\frac{K_{i_\nu}-K_{i_{\nu'}}}{k_{i_{\nu}}+k_{i_{\nu'}}}\right)
\pf(\mathfrak{C}_{i_1,\cdots,i_m})\left(\prod_{\nu=1}^m\rho_n(\kp_{i_\nu})\right)
\end{align}
with $J=\{j=2i|i=1,2,\cdots,[N/2]\}$,
where $\mathfrak{C}_{i_1,\cdots,i_m}$ denotes the sub-matrix of the coefficient matrix $\mathfrak{C}=(c_{\mu,\nu})_{N\times N}$
obtained from selecting from it the rows and columns labelled by $i_1,\cdots,i_m$,
and the exponential factors $\rho_n$ are given in \eqref{PWF}.
The solutions $u_n$, $v_n$ and $w_n$ of the elliptic coupled KP system \eqref{EllKP:NL} are subsequently inferred from the expression \eqref{tauExp} into \eqref{uvwtau}.
\end{theorem}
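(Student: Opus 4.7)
The plan is to combine three ingredients already at our disposal: the Pfaffian representation of $\tau_n$ in \eqref{Sol:BL}, the Laplace-type Pfaffian expansion \eqref{PfaffExp} from Appendix \ref{S:Pfaff}, and the elliptic Pfaffian Cauchy formula \eqref{EllPhase} proved in the preceding lemma. Starting from the skew-symmetric bordered array
\begin{align*}
\tau_n = \pf(\bM|\bR) = (-1)^{N(N-1)/2}\,\pf\left(\begin{array}{c:c} \bM & \bI \\ \hdashline -\bI & \bR \end{array}\right),
\end{align*}
I would first apply \eqref{PfaffExp} to this block Pfaffian. For a matrix of this bordered type the expansion collapses to a finite sum over subsets $I\subset\{1,\dots,N\}$ of products $\pf(\bM_I)\,\pf(\bR_I)$ (with an explicit sign), because the presence of the $\pm\bI$ off-diagonal blocks forces the pairing of an index in the top rows with the same index in the bottom rows. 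Since $\pf$ of an odd-dimensional skew-symmetric matrix vanishes by definition, only subsets $I$ of even cardinality $m$ survive, producing the outer sum $m\in J=\{2,4,\dots,2[N/2]\}$ together with the inner sum over $1\le i_1<\cdots<i_m\le N$ appearing in \eqref{tauExp}; the $m=0$ term supplies the leading $1$.

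Next I would evaluate each factor in the typical summand. For $\pf(\bM_I)$, the sub-matrix of $\bM$ indexed by $I=\{i_1,\dots,i_m\}$ is itself an elliptic Cauchy matrix in the sense of the lemma, so \eqref{EllPhase} immediately yields
\begin{align*}
\pf(\bM_I)=\frac{g^{m(m-2)/8}}{\left(\prod_{\nu=1}^{m}K_{i_\nu}\right)^{(m-2)/2}}\prod_{1\le\nu<\nu'\le m}\frac{K_{i_\nu}-K_{i_{\nu'}}}{k_{i_\nu}+k_{i_{\nu'}}}.
\end{align*}
For $\pf(\bR_I)$, I would use the fact that $R_{\mu,\nu}=c_{\nu,\mu}\rho_n(\kp_\mu)\rho_n(\kp_\nu)$ has the multiplicative structure $R_{\mu,\nu}=\rho_n(\kp_\mu)\rho_n(\kp_\nu)\,\tilde c_{\mu,\nu}$ with $\tilde c_{\mu,\nu}=c_{\nu,\mu}=-c_{\mu,\nu}$. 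The elementary identity $\pf(\alpha_i\alpha_j a_{ij})=\bigl(\prod_i\alpha_i\bigr)\pf(a_{ij})$ for a skew-symmetric $(a_{ij})$ then factors the plane-wave factors cleanly out of $\pf(\bR_I)$ as $\prod_{\nu=1}^{m}\rho_n(\kp_{i_\nu})$, leaving a Pfaffian of the coefficient sub-matrix, which is precisely $\pf(\mathfrak{C}_{i_1,\dots,i_m})$ after combining it with the remaining sign from \eqref{Sol:BL} and the Laplace expansion. Assembling the three factors yields exactly \eqref{tauExp}. The statement about $u_n$, $v_n$, $w_n$ is then immediate from the bilinear transformation \eqref{uvwtau} derived earlier.

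The only genuinely delicate point, and the step I expect to be the main obstacle, is the careful bookkeeping of signs: the overall factor $(-1)^{N(N-1)/2}$ appearing in the triangular-array presentation \eqref{Sol:BL}, the intrinsic signs produced by the Laplace-type expansion \eqref{PfaffExp} (which follow the Pfaffian analogue of Jacobi's complementary-minor rule), the swap $\tilde c_{\mu,\nu}=-c_{\mu,\nu}$, and the parity factor $(-1)^{m/2}$ that appears when one pulls an overall minus sign through a Pfaffian of an even-sized skew-symmetric matrix must all combine to a trivial sign. I would verify this on the smallest nontrivial cases $m=2$ and $m=4$ (which can be checked by hand against the direct expansion of $\det(\bI+\bM\bR)$), and then observe that the general pattern follows from the multiplicativity of the signs in the expansion formula, so that no residual sign survives in the final closed expression \eqref{tauExp}.
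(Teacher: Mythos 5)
Your proposal is correct and follows essentially the same route the paper takes: expand $\tau_n=\pf(\bM|\bR)$ via the Laplace-type formula \eqref{PfaffExp}, evaluate the sub-Pfaffians of $\bM$ by the elliptic Cauchy formula \eqref{EllPhase}, and factor the plane-wave factors out of $\pf(\bR_I)$ to leave $\pf(\mathfrak{C}_{i_1,\dots,i_m})$. Your sign accounting is also right — the postfactor $(-1)^{N(N-1)/2}$ cancels against the prefactor in \eqref{PfaffExp}, and $(-1)^{m(m-1)/2}=(-1)^{m/2}$ for even $m$ cancels against the $(-1)^{m/2}$ produced by the transposition $R_{\mu,\nu}=-c_{\mu,\nu}\rho_n(\kp_\mu)\rho_n(\kp_\nu)$ — so no residual sign survives, exactly as claimed.
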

\begin{remark}
Above we present the formula for the elliptic $N$-soliton solution parametrised by the curve \eqref{Curve}.
The degenerate solutions are obtained when any two of the branch points coincide with each other.
The situation when $e_1=e_2$ or $e_1=e_3$ corresponds to $g=0$ due to \eqref{Moduli}.
In these two cases, equation \eqref{EllKP:NL} reduces to the scalar \ac{KP} equation,
but simultaneously the elliptic $N$-soliton solution turns out to be a trivial one when $g=0$.
The third possibility $e_2=e_3$ leads to $9e^2=4g$, in which case we obtain a degenerate curve
\begin{align*}
k^2=K+3e+\frac{9}{4}\frac{e^2}{K}=\left(\sqrt{K}+\frac{3}{2}\frac{e}{\sqrt{K}}\right)^2
\end{align*}
as well as the corresponding degenerate solutions.
\end{remark}
\noindent
Below we give examples for $N=2,3,4,5$ ($N=1$ results in the seed solution $\tau_n=1$).
\begin{description}
\item[$N=2$]
\begin{align*}
\tau_n=1+\frac{K_1-K_2}{k_1+k_2}c_{1,2}\rho_n(\kp_1)\rho_n(\kp_2);
\end{align*}
\item[$N=3$]
\begin{align*}
\tau_n=1+\sum_{1\leq i_1<i_2\leq 3}\frac{K_{i_1}-K_{i_2}}{k_{i_1}+k_{i_2}}c_{i_1,i_2}\rho_n(\kp_{i_1})\rho_n(\kp_{i_2});
\end{align*}
\item[$N=4$]
\begin{align*}
\tau_n={}&1+\sum_{1\leq i_1<i_2\leq 4}\frac{K_{i_1}-K_{i_2}}{k_{i_1}+k_{i_2}}c_{i_1,i_2}\rho_n(\kp_{i_1})\rho_n(\kp_{i_2}) \\
&+\frac{g}{\displaystyle\prod_{i=1}^4K_i}\left(\prod_{1\leq i_1<i_2\leq 4}\frac{K_{i_1}-K_{i_2}}{k_{i_1}+k_{i_2}}\right)
(c_{1,2}c_{3,4}-c_{1,3}c_{2,4}+c_{1,4}c_{2,3})\left(\prod_{i=1}^4\rho_n(\kp_i)\right);
\end{align*}
\item[$N=5$]
\begin{align*}
\tau_n={}&1+\sum_{1\leq i_1<i_2\leq 5}\frac{K_{i_1}-K_{i_2}}{k_{i_1}+k_{i_2}}c_{i_1,i_2}\rho_n(\kp_{i_1})\rho_n(\kp_{i_2}) \\
&+\sum_{1\leq i_1<i_2<i_3<i_4\leq 5}\frac{g}{\displaystyle\prod_{\nu=1}^4K_{i_\nu}}\left(\prod_{1\leq\nu<\nu'\leq 4}\frac{K_{i_\nu}-K_{i_{\nu'}}}{k_{i_{\nu}}+k_{i_{\nu'}}}\right) \\
&\qquad\qquad\qquad\qquad\times(c_{i_1,i_2}c_{i_3,i_4}-c_{i_1,i_3}c_{i_2,i_4}+c_{i_1,i_4}c_{i_2,i_3})\left(\prod_{\nu=1}^4\rho_n(\kp_{i_\nu})\right).
\end{align*}
\end{description}

We observe from the $\tau$-function \eqref{tauExp} as well as the examples that
the solutions for even $N$ resemble those of the BKP hierarchy apart from extra terms as well as the elliptic kernel and plane wave factors.
For instance, setting $c_{1,3}=c_{2,4}=c_{1,4}=c_{2,3}=0$ and introducing $d_{i,j}$ (which are effectively arbitrary constants) determined by
\begin{align*}
\exp(d_{i,j})\doteq c_{i,j}\frac{K_i-K_j}{k_i+k_j}
\end{align*}
in the formula for $N=4$ yields a particular solution to the bilinear system \eqref{EllKP:BL} as follows:
\begin{align*}
\tau_n={}&1+\exp(\xi_1+\xi_2+d_{1,2})+\exp(\xi_3+\xi_4+d_{3,4}) \\
&+\frac{g}{K_1K_2K_3K_4}\frac{(K_1-K_3)(K_1-K_4)(K_2-K_3)(K_2-K_4)}{(k_1+k_3)(k_1+k_4)(k_2+k_3)(k_2+k_4)}\exp(\xi_1+\xi_2+\xi_3+\xi_4+d_{1,2}+d_{3,4}),
\end{align*}
in which the plane wave factors are given by
\begin{align*}
\exp(\xi_i)\doteq\exp\left\{\sum_{j=0}^\infty k_i^{2j+1}x_{2j+1}+\sum_{j=1}^\infty\left[K_i^j-\left(\frac{g}{K_i}\right)^j\right]x_{2j}\right\}\left(\frac{K_i}{\sqrt{g}}\right)^n
\end{align*}
for $i=1,2,3,4$. While the `two-soliton' solution to the bilinear BKP hierarchy, including its first member
\begin{align*}
(\rD_1^6-5\rD_1^3\rD_3-5\rD_3^2+9\rD_1\rD_5)\tau\cdot\tau=0,
\end{align*}
takes the form of (cf. \cite{DJKM2} and also\cite{Hir89a})
\begin{align*}
\tau={}&1+\exp(\xi_1+\xi_2+d_{1,2})+\exp(\xi_3+\xi_4+d_{3,4}) \\
&+\frac{(k_1-k_3)(k_1-k_4)(k_2-k_3)(k_2-k_4)}{(k_1+k_3)(k_1+k_4)(k_2+k_3)(k_2+k_4)}\exp(\xi_1+\xi_2+\xi_3+\xi_4+d_{1,2}+d_{3,4})
\end{align*}
with plane wave factors
\begin{align*}
\exp(\xi_i)\doteq\exp\left\{\sum_{j=0}^\infty k_i^{2j+1}x_{2j+1}\right\}
\end{align*}
for $i=1,2,3,4$ and arbitrary constants $d_{1,2}$ and $d_{3,4}$.
The main difference here is that the linear dispersion is parametrised by the elliptic curve \eqref{Curve},
and also, an elliptic phase shift term (as a consequence of the significant formula \eqref{EllPhase}) is involved.
They together describe an elliptic-type (which is remarkable from our viewpoint) soliton interaction.
The formulae for odd $N$ are in a sense the respective parameter extensions of those for even numbers $N-1$.

\section{Multi-soliton solutions to the DKP equation with nonzero constant background}\label{S:DKP}

We now discuss the connection between the elliptic coupled \ac{KP} system \eqref{DJM} and the DKP equation \eqref{DKP},
and show how soliton solutions to the DKP equation \eqref{DKP} with nonzero constant background are constructed as a byproduct of the result in section \ref{S:Sol}.

By introducing new variables
\begin{align}\label{TransScal}
\cU\doteq2\partial_1u+c_1, \quad \cV\doteq c_2 v \quad \hbox{and} \quad \cW\doteq c_3 w,
\end{align}
where $c_1$, $c_2$ and $c_3$ are constants obeying $c_1=-6e$ and $c_2c_3=g$, we are able to reformulate \eqref{DJM} as
\bse\label{NPEllKP}
\begin{align}
&\partial_3\cU=\frac{1}{4}\partial_1^3\cU+\frac{3}{2}(\cU+6e)\partial_1\cU+\frac{3}{4}\partial^{-1}_1\partial_2^2\cU-6\partial_1(\cV\cW), \\
&\partial_3\cV=-\frac{1}{2}\partial_1^3\cV-\frac{3}{2}\cU\partial_1\cV+\frac{3}{2}\partial_1\partial_2\cV+\frac{3}{2}(\partial_1^{-1}\partial_2\cU)\cV, \\
&\partial_3\cW=-\frac{1}{2}\partial_1^3\cW-\frac{3}{2}\cU\partial_1\cW-\frac{3}{2}\partial_1\partial_2\cW-\frac{3}{2}(\partial_1^{-1}\partial_2\cU)\cW.
\end{align}
\ese
Then the following transformations between partial-differential operators:
\begin{align*}
\partial_1=\partial_x, \quad \partial_2=\partial_y, \quad \partial_3=\partial_t+9e\partial_x,
\end{align*}
which effectively follow from a Galilean transformation composed of
\begin{align}\label{Galilean}
x\doteq x_1+9e x_3, \quad y\doteq x_2 \quad \hbox{and} \quad t\doteq x_3,
\end{align}
leads \eqref{NPEllKP} to the DKP equation \eqref{DKP}.
Notice that the solutions we have obtained for the elliptic coupled \ac{KP} system in section \ref{S:Sol} are the ones with background $u=0$, $v=1$ and $w=1$.
This implies that here we are able to construct soliton solutions to the DKP equation with nonzero constant background with the help of \eqref{TransScal} and \eqref{Galilean}.
We conclude this as the theorem below.
\begin{theorem}
The $N$-soliton solutions to the DKP equation \eqref{DKP} with nonzero constant background\footnote{
The constant $c_1$ is not necessarily nonzero, because $c_1=0$ does not lead to degeneration of the elliptic curve \eqref{DKP:Curve}.
}
\begin{align}\label{DKP:BG}
\cU=c_1, \quad \cV=c_2, \quad \cW=c_3 \quad \hbox{for} \quad c_2c_3\neq 0
\end{align}
are given by
\begin{align}\label{DKP:Sol}
\cU=c_1+2\partial_x^2\ln\tau_n, \quad \cV=c_2\frac{\tau_{n+1}}{\tau_n} \quad \hbox{and} \quad \cW=c_3\frac{\tau_{n-1}}{\tau_n},
\end{align}
where the $\tau$-function (cf. \eqref{tauExp}) takes the form of
\begin{align*}
\tau_n=1+\sum_{m\in J}\,\sum_{1\leq i_1<i_2<\cdots<i_m\leq N}
\frac{(c_2c_3)^{m(m-2)/8}}{\left(\displaystyle\prod_{\nu=1}^m\,K_{i_\nu}\right)^{(m-2)/2}}\left(\prod_{1\leq\nu<\nu'\leq m}\frac{K_{i_\nu}-K_{i_{\nu'}}}{k_{i_{\nu}}+k_{i_{\nu'}}}\right)
\pf(\mathfrak{C}_{i_1,\cdots,i_m})\left(\prod_{\nu=1}^m\rho_n(\kp_{i_\nu})\right)
\end{align*}
with $(k_{i_{\nu}},K_{i_\nu})$ being the spectral points on the elliptic curve
\begin{align}\label{DKP:Curve}
k^2=K-\frac{c_1}{2}+\frac{c_2c_3}{K}
\end{align}
and $\rho_n(\kp_{i_\nu})$ being the plane wave factors determined by
\begin{align}\label{DKP:PWF}
\rho_n(\kp)=\exp\left\{kx+\left(K-\frac{c_2c_3}{K}\right)y+\left(k^3+\frac{3c_1}{2}k\right)t\right\}\left(\frac{K}{\sqrt{c_2c_3}}\right)^n.
\end{align}
\end{theorem}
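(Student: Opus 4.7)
The plan is to derive the DKP soliton solutions directly as a corollary of the elliptic $N$-soliton construction in Section~\ref{S:Sol}, by pushing the $\tau$-function formula \eqref{tauExp} through the maps $(u,v,w) \mapsto (\cU,\cV,\cW)$ given by \eqref{TransScal} and the Galilean change of variables \eqref{Galilean}. Since the opening paragraph of this section has already established that these two transformations together send \eqref{DJM} to \eqref{DKP} under the parameter identifications $c_1 = -6e$ and $c_2 c_3 = g$, it suffices to translate the formulas of Section~\ref{S:Sol} and verify the final expressions coincide with \eqref{DKP:Sol}--\eqref{DKP:PWF}.

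First I would handle the spectral data. Substituting $3e = -c_1/2$ and $g = c_2 c_3$ into the elliptic curve \eqref{Curve} gives immediately \eqref{DKP:Curve}. For the plane wave factor, restricting \eqref{PWF} to the variables $x_1, x_2, x_3$ (all other $x_j$ will not appear in a solution of a $(2+1)$-dimensional equation) yields $\rho_n(\kp) = \exp\{k x_1 + (K - g/K) x_2 + k^3 x_3\}(K/\sqrt{g})^n$. Applying \eqref{Galilean}, i.e.\ $x_1 = x - 9e t$, $x_2 = y$, $x_3 = t$, and using $-9e = 3c_1/2$, $g = c_2 c_3$, I recover exactly \eqref{DKP:PWF}.

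Second, I would apply the bilinear transformations \eqref{uvwtau} to express the elliptic coupled KP potentials through the $\tau$-function, namely $u_n = \partial_1 \ln \tau_n$, $v_n = \tau_{n+1}/\tau_n$, $w_n = \tau_{n-1}/\tau_n$. Combined with \eqref{TransScal} and the identity $\partial_1 = \partial_x$ under the Galilean shift, this produces the formulas \eqref{DKP:Sol} verbatim. The $\tau$-function \eqref{tauExp} then carries over to the DKP setting by the same parameter substitutions $3e \to -c_1/2$, $g \to c_2 c_3$ in the elliptic Cauchy phase shifts and plane wave factors, giving the expression stated in the theorem. The background behaviour \eqref{DKP:BG} is recovered in the limit where all exponentials in $\tau_n$ decay, since then $\tau_{n\pm 1}/\tau_n \to 1$ and $\partial_x^2 \ln \tau_n \to 0$, so $(\cU,\cV,\cW) \to (c_1, c_2, c_3)$.

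There is no serious obstacle in this proof: the entire construction of Section~\ref{S:DL} through Section~\ref{S:Sol} has already done the heavy lifting, and what remains is the routine bookkeeping of a linear change of coordinates together with an affine rescaling of the fields. The only minor subtlety is ensuring that the nonlocal term $\partial_x^{-1} \partial_y \cU$ appearing in \eqref{DKP} transforms correctly under \eqref{Galilean}; this is fine because $\partial_1 = \partial_x$ and $\partial_2 = \partial_y$ coincide on the relevant functions, and the integration constant is fixed by the background $\cU \to c_1$. Thus the theorem follows directly from \eqref{tauExp} together with the transformations \eqref{TransScal} and \eqref{Galilean}.
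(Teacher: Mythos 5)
Your proposal is correct and follows essentially the same route as the paper, whose (implicit) proof is precisely the preceding discussion: apply the field rescaling \eqref{TransScal} with $c_1=-6e$, $c_2c_3=g$ and the Galilean change of variables \eqref{Galilean} to the elliptic $N$-soliton data of section \ref{S:Sol}, substituting $3e\to -c_1/2$ and $g\to c_2c_3$ in the curve, the plane wave factor and the $\tau$-function expansion \eqref{tauExp}. The bookkeeping you carry out (in particular $k^3-9ek=k^3+\tfrac{3c_1}{2}k$ in the dispersion and the recovery of the background \eqref{DKP:BG}) matches the paper's construction.
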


\begin{remark}
The special case when the background constants $c_1$, $c_2$ and $c_3$ satisfy $c_1^2=16c_2c_3$
will lead to the fact that the spectral points in the soliton solutions are parametrised by the degenerate curve
\begin{align*}
k^2=K-\frac{c_1}{2}+\frac{c_1^2}{16}\frac{1}{K}=\left(\sqrt{K}-\frac{c_1}{4}\frac{1}{\sqrt{K}}\right)^2.
\end{align*}
We comment that the multi-soliton solutions to the DKP equation discussed here
differ from those with zero background $\cU=\cV=\cW=0$ in the literature, cf. \cite{HO91,Gil02,KM06}.
In this paper, the constants $c_1$, $c_2$ and $c_3$ for the nonzero `seed solutions' will eventually
play a role of the moduli in the elliptic curve that parameterises the spectral points in the multi-solitons.
\end{remark}

\noindent
For example, by taking $N=2$ we obtain the simplest nontrivial $\tau$-function for soliton solutions to the DKP equation \eqref{DKP}
with nonzero constant background \eqref{DKP:BG} as follows:
\begin{align*}
\tau_n=1+\exp\Bigg\{(k_1+k_2)x&+\left(K_1-\frac{c_2c_3}{K_1}+K_2-\frac{c_2c_3}{K_2}\right)y \\
&+\left(k_1^3+k_2^3+\frac{3c_1}{2}(k_1+k_2)\right)t+d_{1,2}\Bigg\}\left(\frac{K_1K_2}{c_2c_3}\right)^n,
\end{align*}
where $(k_1,K_1)$ and $(k_2,K_2)$ are points on the elliptic curve \eqref{DKP:Curve} and $d_{1,2}$ is an arbitrary constant.
In this case, the corresponding components for the `one-soliton' solution to the DKP equation \eqref{DKP} are given by
\begin{align*}
\cU=c_1+2\partial_x^2\ln\tau_0, \quad \cV=c_2\frac{\tau_1}{\tau_0} \quad \hbox{and} \quad  \cW=c_3\frac{\tau_{-1}}{\tau_0},
\end{align*}
where we have fixed $n=0$ in \eqref{DKP:Sol} for simplicity.

We can similarly construct solutions with nonzero constant background given by $\cU=c_1$, $\cV=c_2$ and $\cW=c_3$ to the coupled \ac{KdV} system
(which was introduced in \cite{WGHZ99} as a generalisation of the Hirota--Satsuma equation \cite{HS81})
\bse\label{CKdV}
\begin{align}
&\partial_t\cU=\frac{1}{4}\partial_x^3\cU+\frac{3}{2}\cU\partial_x\cU-6\partial_x(\cV\cW), \\
&\partial_t\cV=-\frac{1}{2}\partial_x^3\cV-\frac{3}{2}\cU\partial_x\cV, \\
&\partial_t\cW=-\frac{1}{2}\partial_x^3\cW-\frac{3}{2}\cU\partial_x\cW
\end{align}
\ese
based on \eqref{EllKdV:NL}. Since the procedure is the same, we omit the detail here.

Note that Galilean transformation \eqref{Galilean} has affected the coefficient of $t$ in the linear dispersion (compare \eqref{DKP:PWF} with \eqref{PWF}).
Generally we are not able to directly construct multi-soliton solutions to the coupled \ac{BSQ} system (which follows from the $t$-independent reduction of \eqref{DKP})
\bse\label{CBSQ}
\begin{align}
&\partial_y^2\cU=-\frac{1}{3}\partial_1^4\cU-2\cU\partial_1^2\cU-2(\partial_1\cU)^2+8\partial_1^2(\cV\cW), \\
&\partial_x\partial_y\cV=\frac{1}{3}\partial_x^3\cV+\cU\partial_x\cV-(\partial_x^{-1}\partial_y\cU)\cV, \\
&\partial_x\partial_y\cW=-\frac{1}{3}\partial_x^3\cW-\cU\partial_x\cW-(\partial_x^{-1}\partial_y\cU)\cW
\end{align}
\ese
with nonzero constant background from those to \eqref{EllBSQ:NL},
unless we discuss a very special case for $c_1=0$ (which implies that the constant background is given by $\cU=0$, $\cV=c_2$ and $\cW=c_3$).
This is because the $t$-independent reduction from \eqref{DKP} to \eqref{CBSQ} requires a constraint in the form of
\begin{align*}
\mathcal{F}(\kp,\kp')=k^3+k'^3+\frac{3c_1}{2}(k+k')=0.
\end{align*}
which is incompatible with the constraint $k^2+kk'+k'^2=0$ (essentially leading to $k^3+k'^3=0$) in the construction of \eqref{EllBSQ:NL},
from the perspective of the \ac{DL}, cf. section \ref{S:Reduc}.

\section{Concluding remarks}\label{S:Concl}

The elliptic coupled \ac{KP} system \eqref{DJM} was studied within the \ac{DL} framework,
which provided us with a unified perspective to understand the integrability of the elliptic model.
As a consequence, we have constructed its Lax pair and elliptic soliton solutions.
The elliptic coupled \ac{KdV} and \ac{BSQ} systems were obtained
from dimensional reductions of the elliptic coupled \ac{KP} system, together with their respective Lax pairs.
An interesting observation is that from the elliptic coupled \ac{KP} system we are able to construct a new class of solutions
(i.e. solitons with nonzero constant background) to the DKP equation through a Galilean transformation.

Since the \ac{DL} approach is akin to the Riemann--Hilbert method appearing in the inverse scattering, based on similar singular integral equations,
we expect that the present results open the way to a comprehensive study of the initial value problems associated with these elliptic models.
Also, we expect that algebro-geometric solutions of higher-genus can be treated,
as well as other reductions (e.g. pole reductions) leading to possibly novel finite-dimensional integrable systems.
Furthermore, the structures emerging from the \ac{DL} approach in terms of the infinite matrix representation \eqref{UDyn} are evidence for the possibility
that these nonlinear systems are also integrable in the sense of possessing higher-order symmetries.
However, it remains an open problem to explicitly construct these higher-order symmetries.
In particular, we may want to establish a Sato-type scheme for the coupled elliptic \ac{KP} system through a matrix pseudo-differential operator algebra,
and construct recursion operators for the elliptic coupled \ac{KdV} and \ac{BSQ} systems.

Further aspects of the elliptic \ac{KP} family of systems remain to be investigated.
So far, we have only constructed the single-component \ac{KP}-type equation and its dimensional reductions in the present paper.
It would be interesting to find the multi-component (or matrix) \ac{KP}-type hierarchy in this family,
and simultaneously to see how they are related to the \ac{LL} equation.
In addition to the \ac{LL} equations, there also exist the Krichever--Novikov equation \cite{KN80} and the elliptic analogue of the Toda equation \cite{Kri00}.
Both are elliptic integrable systems that play roles of the master equations in their respective classes.
How these particular elliptic systems and their discrete analogues are related to this elliptic family remains a problem for future work.
The first step towards this goal might be searching for a discrete analogue of the elliptic coupled \ac{KP} system \eqref{DJM},
since the integrability of the discrete (non-elliptic) coupled \ac{KP} system has been studied in \cite{GNT01}
which we believe would bring us insights into the elliptic case.

\section*{Acknowledgments}
WF was supported by the National Natural Science Foundation of China (grant no. 11901198)
and the Science and Technology Commission of Shanghai Municipality (grant no. 18dz2271000).
FWN was supported by the Engineering and Physical Sciences Research Council (EP/W007290/1).
\begin{appendix}

\section{Bilinear derivative and logarithmic transformations}\label{S:BL}

In this appendix we briefly recapitulate the notion of Hirota's bilinear derivative and the relevant logarithmic transformations.
We also refer the reader to the monographs \cite{Hir04} and \cite{MJD00} for more details regarding bilinear derivatives.
\begin{definition}
Suppose that $F$ and $G$ are differentiable functions of the independent variables $x_j$ for $j\in\mathbb{Z}^+$.
The $m$th-order bilinear derivative of $F$ and $G$ with respect to the argument $x_j$ is defined as
\begin{align}
\rD_j^m F\cdot G\doteq \left(\frac{\partial}{\partial x_j}-\frac{\partial}{\partial x'_j}\right)^m F(\bx)G(\bx')\Bigg|_{\bx'=\bx}
\end{align}
for $\bx=(x_1,x_2,\cdots)$ and $\bx'=(x'_1,x'_2,\cdots)$.
\end{definition}

\begin{remark}
We can alternatively define the bilinear derivative by
\begin{align}
\re^{\varepsilon\rD_j}F\cdot G=F(\cdots,x_{j-1},x_j+\varepsilon,x_{j+1},\cdots)G(\cdots,x_{j-1},x_j-\varepsilon,x_{j+1},\cdots),
\end{align}
in which $\varepsilon$ is a parameter.
Then we obtain the explicit formulae for $\rD_j^m F\cdot G$ from the coefficients of $\varepsilon^m$ for $m=1,2,\cdots$ in the series expansion.
\end{remark}

Bilinear derivatives are closely related to derivatives of logarithmic functions through the so-called logarithmic and bi-logarithmic transformations.
These transformations are derived from some fundamental identities in terms of exponents of $\rD$- and $\partial$-operators, see e.g. \cite{Hir04}.
Below we only list transformations that are needed in this paper (which can even be proven by definition of bilinear derivative).
The logarithmic transformations include the following:
\bse\label{Log}
\begin{align}
&\frac{\rD_j^2 F\cdot F}{F^2}=2\partial_j^2\ln F, \\
&\frac{\rD_j^4 F\cdot F}{F^2}=2\partial_j^4\ln F+12\left(\partial_j^2\ln F\right)^2, \\
&\frac{\rD_j^6 F\cdot F}{F^2}=2\partial_j^6\ln F+60\left(\partial_j^2\ln F\right)\left(\partial_j^4\ln F\right)+120\left(\partial_j^2\ln F\right)^3, \\
&\frac{\rD_i\rD_j F\cdot F}{F^2}=2\partial_i\partial_j\ln F.
\end{align}
\ese
These transformations only involve a single function $F$.
Instead, bi-logarithmic transformations involve two functions $F$ and $G$.
The first few of such transformations are as follows:
\bse\label{BiLog}
\begin{align}
&\frac{\rD_j F\cdot G}{FG}=\partial_j\ln\frac{F}{G}, \\
&\frac{\rD_j^2 F\cdot G}{FG}=\left(\partial_j\ln\frac{F}{G}\right)^2+\partial_j^2\ln\frac{F}{G}+2\partial_j^2\ln G, \\
&\frac{\rD_j^3 F\cdot G}{FG}=\left(\partial_j\ln\frac{F}{G}\right)^3+\partial_j^3\ln\frac{F}{G}
+3\left(\partial_j\ln\frac{F}{G}\right)\left(\partial_j^2\ln\frac{F}{G}+2\partial_j^2\ln G\right), \\
&\frac{\rD_iD_j F\cdot G}{FG}=\left(\partial_i\ln\frac{F}{G}\right)\left(\partial_j\ln\frac{F}{G}\right)+\partial_i\partial_j\ln\frac{F}{G}+2\partial_i\partial_j\ln G.
\end{align}
\ese
Equations \eqref{Log} and \eqref{BiLog} together allow us to transfer the bilinear equations in \eqref{EllKP:BL} to nonlinear equations in \eqref{EllKP:NL}.
We can also use these formulae reversely, in order to reformulate the nonlinear system as its corresponding bilinear form.

\section{Pfaffian and an expansion formula}\label{S:Pfaff}

Here we remind the reader of a few facts about Pfaffians (see \cite{Mui04,Hir04}), and present an expansion formula (see \cite{Cai59,Oka19})
that is effective in expressing the soliton solutions of the elliptic coupled \ac{KP} system in a concise (Hirota-type) form.

Let $\bA$ be an $N\times N$ skew-symmetric matrix with entries $a_{i,j}$, hence $\bA$
is of the form
\begin{align*}
\bA=
\begin{pmatrix}
0 & a_{1,2} & a_{1,3} & \cdots & a_{1,N} \\
-a_{1,2} & 0 & a_{2,3} & \cdots & a_{2,N} \\
-a_{1,3} & -a_{2,3} & 0 & \cdots & \vdots \\
\vdots & \vdots & \vdots & \ddots & a_{N-1,N} \\
-a_{1,N} & -a_{2,N} & \cdots & -a_{N-1,N} & 0
\end{pmatrix}.
\end{align*}
The Pfaffian $\pf(\bA)$ associated with the matrix $\bA$ can be defined as follows.
\begin{definition}
The $N$th-order Pfaffian associated with $\bA$ is the triangular array (see e.g. \cite{Mui04} for such a notation)
\bse\label{PfaffDef}
\begin{align}\label{PfaffForm}
\pf(\bA)=
\left.
\begin{array}{cccc}
|\, a_{1,2} & a_{1,3} & \cdots & a_{1,N} \\
& a_{2,3} & \cdots & a_{2,N} \\
& & \ddots & \vdots \\
& & & a_{N-1,N}
\end{array}
\right|,
\end{align}
which is uniquely defined by the recursion relation
\begin{align} \label{PfaffRecurs}
\pf(\bA)=\sum_{i=2}^N(-1)^i a_{1,i}
\left.
\begin{array}{ccccccc}
|\,a_{2,3}& \cdots &a_{2,i-1} & a_{2,i+1} & a_{2,i+2} & \cdots & a_{1,N} \\
& \ddots & \vdots & \vdots & \vdots & \vdots & \vdots \\
& & a_{i-2,i-1} & a_{i-2,i+1} & a_{i-2,i+2} & \cdots & a_{i-2,N} \\
& & & a_{i-1,i+1} & a_{i-1,i+2} & \cdots & a_{i-1,N} \\
& & & & a_{i+1,i+2} & \cdots & a_{i+1,N} \\
& & & & & \ddots & \vdots \\
& & & & & & a_{N-1,N}
\end{array}
\right|,
\end{align}
\ese
together with the initial values defined as $|\,\cdot\,|\doteq 0$ and $|a_{1,2}|\doteq a_{1,2}$ for $N=1$ and $N=2$, respectively.
\end{definition}

\begin{remark}
The definition implies that a Pfaffian is only nonzero when $N$ is even.
For example, we have
\begin{align*}
\left.
\begin{array}{cc}
|\, a_{1,2} & a_{1,3} \\
& a_{2,3}
\end{array}
\right|=0 \quad \hbox{and} \quad
\left.
\begin{array}{ccc}
|\, a_{1,2} & a_{1,3} & a_{1,4} \\
& a_{2,3} & a_{2,4} \\
& & a_{3,4}
\end{array}
\right|
=a_{1,2}a_{3,4}-a_{1,3}a_{2,4}+a_{1,4}a_{2,3},
\end{align*}
for $N=3$ and $N=4$, respectively.
We note the following important relation between the Pfaffians and determinants of skew-symmetric matrices of the form given above:
\begin{align}\label{DetPfaff}
\det(\bA)=\left(\pf(\bA)\right)^2.
\end{align}
Hence, the Pfaffian of $\bA$ can be thought of as the square root of a determinant of a skew-symmetric matrix.
However, the latter relation does not define the Pfaffian uniquely, while the recursion relation does.
\end{remark}

What we need, in order to obtain our explicit form of the soliton solutions,
is a Pfaffian analogue of the expansion formula for a determinant of the type $\det(\bI+\bA\bB)$
in terms of the matrix invariants of $\bA\bB$ (which are the sums of its principal minors).
This is given in the lemma below.
To express the formula in a compact manner, let us introduce the notation
\begin{align*}
\bA_{i_1,i_2,\cdots,i_m}
\end{align*}
denoting the sub-matrix of $\bA$ by selecting from it the rows and columns labelled by $i_1,i_2,\cdots,i_m$ for $1\leq i_1<i_2<\cdots<i_m<N$.
For examples, we have
\begin{align*}
\bA_{i_1}=a_{i_1,i_1}, \quad
\bA_{i_1,i_2}=
\begin{pmatrix}
a_{i_1,i_1} & a_{i_1,i_2} \\
a_{i_2,i_1} & a_{i_2,i_2}
\end{pmatrix}
\quad \hbox{and} \quad
\bA_{1,2,\cdots,N}=\bA.
\end{align*}
In this notation, the expansion formula \eqref{PfaffRecurs} can be rewritten as
\begin{align*}
\pf(\bA)=\sum_{i=2}^N(-1)^i a_{1,i}\pf(\bA_{2,\cdots,i-1,i+1,\cdots,N}).
\end{align*}
\begin{lemma}
Let $\bA$ and $\bB$ be $N\times N$ skew-symmetric matrices of the form given above for $\bA$ and a similar form for $\bB$ (with entries $b_{i,j}$).
The following expansion formula holds for the special Pfaffian of the format
\begin{align}\label{PfaffExp}
\pf(\bA|\bB)\doteq{}&
\left.\begin{array}{ccccccccc}
|\,a_{1,2} & a_{1,3} & \cdots & a_{1,N} & 1 & 0 & \cdots & 0 & 0 \\
& a_{2,3} & \cdots & a_{2,N} & 0 & 1 & \cdots & 0 & 0 \\
& & \ddots & \vdots & \vdots & \vdots & \ddots & \vdots & \vdots \\
& & & a_{N-1,N} & 0 & 0 & \cdots & 1 & 0 \\
& & & & 0 & 0 & \cdots & 0 & 1 \\
& & & & & b_{1,2} & \cdots & b_{1,N-1} & b_{1,N} \\
& & & & & & \ddots & \vdots & \vdots \\
& & & & & & & b_{N-2,N-1} & b_{N-2,N} \\
& & & & & & & & b_{N-1,N}
\end{array}
\right| \nonumber \\
={}&(-1)^{N(N-1)/2}\left[1+\sum_{m\in J}(-1)^{m(m-1)/2}\sum_{1\leq i_1<i_2<\cdots<i_m\leq N}
\pf(\bA_{i_1,i_2,\cdots,i_m})\pf(\bB_{i_1,i_2,\cdots,i_m})\right],
\end{align}
for $J=\{j=2i|i=1,2,\cdots,[N/2]\}$.
\end{lemma}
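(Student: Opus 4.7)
The plan is to recognise the triangular array in the definition of $\pf(\bA|\bB)$ as the strict upper triangle of the $2N\times 2N$ skew-symmetric block matrix
$$
M\doteq\begin{pmatrix}\bA & \bI\\ -\bI & \bB\end{pmatrix},
$$
so that $\pf(\bA|\bB)=\pf(M)$, and then to unfold $\pf(M)$ via the combinatorial definition
$\pf(M)=\sum_\pi\sgn(\pi)\prod_{\{i,j\}\in\pi,\,i<j}M_{ij}$
as a signed sum over perfect matchings $\pi$ of $\{1,\dots,2N\}$. The very simple structure of the cross-blocks $\pm\bI$ will force the surviving matchings to have a very restricted shape.

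Since $M_{i,N+j}=\delta_{i,j}$, any matching containing a cross-block edge $\{i,N+j\}$ with $i\neq j$ contributes zero. A nonzero matching is therefore a disjoint union of three pieces: a set of ``straight-across'' edges $\{t,N+t\}$ with $t\in T$ for some subset $T\subseteq\{1,\dots,N\}$; a perfect matching $\pi_{\bA}$ of $S\doteq\{1,\dots,N\}\setminus T$ inside the top block; and a perfect matching $\pi_{\bB}$ of $N+S$ inside the bottom block. This forces $m\doteq|S|$ to be even. For fixed $S=\{i_1<\cdots<i_m\}$, summing the sign-weighted products over $\pi_{\bA}$ and $\pi_{\bB}$ reproduces, up to a global sign depending only on $(S,T,N)$, the product $\pf(\bA_{i_1,\dots,i_m})\pf(\bB_{i_1,\dots,i_m})$, directly from the definition \eqref{PfaffDef}, while the $N-m$ straight-across edges each contribute an entry equal to $1$.

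The main obstacle is the sign bookkeeping $\sgn(\pi)=\sgn(\pi_{\bA})\,\sgn(\pi_{\bB})\,\epsilon(S,T,N)$. Invoking the invariance $\pf(PMP^T)=\det(P)\pf(M)$ under a permutation $P$ that acts by the same permutation on the top and bottom blocks (such a $P$ has $\det(P)=1$, and preserves the formula on the right-hand side because the sum over $S$ is permutation-invariant), I would first reduce to the canonical configuration $S=\{1,\dots,m\}$, $T=\{m+1,\dots,N\}$ with the canonical pairings $\pi_{\bA}=\{(1,2),\dots,(m-1,m)\}$ and $\pi_{\bB}=\{(N{+}1,N{+}2),\dots,(N{+}m{-}1,N{+}m)\}$. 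An inversion count in the resulting ordered-pair listing
$$
(1,2,\dots,m-1,m,\,m+1,N{+}m{+}1,\,m+2,N{+}m{+}2,\,\dots,N,2N,\,N{+}1,N{+}2,\dots,N{+}m{-}1,N{+}m)
$$
then splits into $0$ inversions inside the top block, $(N-m)(N-m-1)/2$ inversions inside the straight-across portion (the same count as in the $m=0$ base case), and $m(N-m)$ inversions between the straight-across portion and the bottom block. The elementary identity $m(N-m)+\binom{N-m}{2}\equiv\binom{N}{2}+\binom{m}{2}\pmod{2}$ collapses this to $\epsilon(m,N)=(-1)^{N(N-1)/2}(-1)^{m(m-1)/2}$. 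Summing over $S$ of even cardinality $m$ and over $m$ then assembles precisely the right-hand side of \eqref{PfaffExp}, with the isolated $1$ in the bracket coming from the unique $m=0$ (all-straight-across) matching. I would cross-check the whole argument on the small cases $N=2,3$ to guard against any sign slips.
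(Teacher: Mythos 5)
Your argument is correct, and it is genuinely different in character from what the paper does: the paper does not prove the lemma at all but simply identifies \eqref{PfaffExp} as a special case of the general Laplace-type (minor-summation) expansion of Okada's Proposition~2.3 and defers to that reference, whereas you give a self-contained combinatorial proof by reading the triangular array as $\pf$ of the $2N\times 2N$ skew-symmetric block matrix $\bigl(\begin{smallmatrix}\bA & \bI\\ -\bI & \bB\end{smallmatrix}\bigr)$ and summing over perfect matchings. The payoff of your route is transparency: the Kronecker-delta cross block kills every matching except those built from straight-across edges plus internal matchings of a common even subset $S$ in each diagonal block, which makes the factorisation into $\pf(\bA_S)\pf(\bB_S)$ and the origin of the isolated $1$ (the $m=0$ term) completely visible; the cost is that all the content sits in the sign bookkeeping, which the citation route outsources. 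Your sign computation checks out: the conjugation by $\sigma\oplus\sigma$ (determinant $+1$) legitimately reduces every $S$ to the canonical position, the inversion count $\binom{N-m}{2}+m(N-m)\equiv\binom{N}{2}+\binom{m}{2}\pmod 2$ is right, and it reproduces the paper's $N=2,3,4,5$ examples. The one step you state rather than prove is that, for fixed $S$, the overall sign is the \emph{same} for every choice of the internal matchings $\pi_{\bA}$ and $\pi_{\bB}$ (so that the double sum really factors as $\epsilon(m,N)\,\pf(\bA_S)\pf(\bB_S)$ rather than term-by-term with fluctuating signs); this does hold, and is easiest to see after your reduction to canonical position, where the three groups of edges --- those inside $\{1,\dots,m\}$, those inside $\{N+1,\dots,N+m\}$, and the straight-across edges --- are pairwise non-crossing (nested or separated), so the crossing number, hence the sign, splits as $\sgn(\pi)=\sgn(\pi_{\bA})\sgn(\pi_{\bB})\sgn(\pi_T)$ with $\sgn(\pi_T)=(-1)^{\binom{N-m}{2}}$ independent of $\pi_{\bA},\pi_{\bB}$. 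Adding that one sentence would make the proof complete.
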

\noindent
Formula \eqref{PfaffExp} is a special case of a Laplace-type expansion formula for Pfaffian,
see proposition 2.3 in \cite{Oka19} for the general formula and its proof.
The crucial upshot of the lemma is that the terms in the expansion \eqref{PfaffExp} are products of separate sub-Pfaffians of $\bA$ and $\bB$, respectively.
To give an idea how this expansion looks like, let us write them down explicitly for the values of $N=2,3,4,5$.
\begin{description}
\item[$N=2$]
\begin{align*}
\pf(\bA\,|\,\bB)=-1+a_{1,2}b_{1,2};
\end{align*}
\item[$N=3$]
\begin{align*}
\pf(\bA\,|\,\bB)=-1+a_{1,2}b_{1,2}+a_{1,3}b_{1,3}+a_{2,3}b_{2,3};
\end{align*}

\item[$N=4$]
\begin{align*}
\pf(\bA\,|\,\bB)=1-\sum_{1\leq i_1<i_2\leq 4} a_{i_1,i_2} b_{i_1,i_2}+
\left. \begin{array}{ccc}
|\, a_{1,2} & a_{1,3} &  a_{1,4} \\
& a_{2,3} & a_{2,4} \\
& & a_{3,4}
\end{array}
\right|\,
\left.
\begin{array}{ccc}
|\, b_{1,2} & b_{1,3} &  b_{1,4} \\
& b_{2,3} & b_{2,4} \\
& & b_{3,4}
\end{array}
\right|;
\end{align*}

\item[$N=5$]
\begin{align*}
\pf(\bA\,|\,\bB)={}1&-\sum_{1\leq i_1<i_2\leq 5}a_{i_1,i_2}b_{i_1,i_2} \\
&+\sum_{1\leq i_1<i_2<i_3<i_4\leq 5}
\left.
\begin{array}{ccc}
|\, a_{i_1,i_2} & a_{i_1,i_3} & a_{i_1,i_4} \\
& a_{i_2,i_3} & a_{i_2,i_4} \\
& & a_{i_3,i_4}
\end{array}
\right|\,
\left.
\begin{array}{ccc}
|\, b_{i_1,i_2} & b_{i_1,i_3} & b_{i_1,i_4} \\
& b_{i_2,i_3} & b_{i_2,i_4} \\
& & b_{i_3,i_4}
\end{array}
\right|.
\end{align*}
\end{description}
In the case of the soliton solution \eqref{Sol:BL} the fact that in this expansion we get sums of products of separate Pfaffians is crucial,
as we can compute the Pfaffians of the elliptic Cauchy matrix in explicit form, with the formulae given in the next appendix.

\section{Applying Frobenius formula for elliptic Cauchy matrix}\label{S:Cauchy}

We introduce the elliptic functions
\begin{align}\label{W}
W_1(x)\equiv\Phi_{\og_1}(x)\re^{-\eta_1x}, \quad W_2(x)\equiv\Phi_{\og_2}(x)\re^{-\eta_2x} \quad \hbox{and} \quad W_3(x)\equiv\Phi_{\og_3}(x)\re^{-\eta_3x},
\end{align}
where $\Phi_\kappa(x)$ is the Lam\'e function, given in terms of the Weierstrass
$\sigma$-function $\sg(x)=\sg\left(x|2\og_1,2\og_2\right)$ with half-periods $\og_1,\og_2$
($2\og_1$ and $2\og_2$ being the elementary periods of the period lattice):
\begin{align}\label{Phi}
\Phi_\alpha(x)\equiv\frac{\sg(x+\alpha)}{\sg(\alpha)\sg(x)},
\end{align}
where $\alpha$ is an arbitrary complex variable not coinciding with any zero of the
$\sg$-function, and $\eta_1=\zeta(\og_1)$, $\eta_2=\zeta(\omega_2)$, $\eta_3=\zeta(\omega_3)$,
where $\og_3=-\og_1-\og_2$ (see e.g. \cite{Akh90} for the standard notation of Weierstrass elliptic functions).

From the standard addition formulae for the Weierstrass functions $\sg(x)$, $\zeta(x)$ and $\wp(x)$
we obtain the relations for the functions $W_1$, $W_2$ and $W_3$ including the Yang--Baxter-type relation
\begin{equation}\label{WW}
W_1(x)W_2(z)+W_2(y)W_3(x)+W_3(z)W_1(y)=0, \quad x+y+z=0.
\end{equation}
Equation \eqref{WW} follows from the well-known 3-term addition formula for the $\sg$-function,
which can be written in the compact form of an elliptic partial fraction expansion in terms of $\Phi$, namely
\begin{align}\label{PhiAdd}
\Phi_\alpha(x)\Phi_\beta(y)=\Phi_{\alpha+\beta}(x)\Phi_\beta(y-x)+\Phi_\alpha(x-y)\Phi_{\alpha+\beta}(y).
\end{align}
In fact, equation \eqref{WW} follows directly by setting $\alpha=\og_1$ and $\beta=\og_2$ in \eqref{PhiAdd} which implies $\alpha+\beta=-\og_3$
and using the relation $\Phi_{-\og_3}(x)=\Phi_{\og_3}(x)\re^{-2\eta_3x}$ which follows from quasi-periodicity of the $\sigma$-function
\begin{align*}
\sigma(x+2\og_1)=-\sigma(x)\re^{2\eta_1(x+\og_1)} \quad \hbox{and} \quad \sigma(x+2\og_2)=-\sigma(x)\re^{2\eta_2(x+\og_2)}.
\end{align*}
Furthermore, we have the following addition formulae:
\bse\label{WRel}
\begin{align}
&W_1^2(x)+e_1=W_2^2(x)+e_2=W_3^2(x)+e_3=\wp(x), \label{WRela} \\
&W_1(x)W_2(x)W_3(x)=-\frac{1}{2}\wp'(x), \label{WRelb}
\end{align}
\ese
where $e_1=\wp(\og_1)$, $e_2=\wp(\og_2)$, $e_3=\wp(\og_3)$.
Equations \eqref{WRel} are essentially reformulations of some well-known addition formulae for the Weierstrass elliptic functions.
For example, equations in \eqref{WRela} follow from
\begin{align*}
\Phi_\alpha(x)\Phi_\alpha(-x)=\wp(\alpha)-\wp(x),
\end{align*}
by setting $\alpha=\og_1,\og_2,\og_3$ successively.
We refer the reader to page 397 of the monograph \cite{HJN16} for those addition formulae in terms of the $W$-functions and their proofs.

Let us now single out one of the half-periods $\og_1$, and the corresponding function $W_1(x)$, for which we have the relation
\bse\label{WP}
\begin{align}\label{WPa}
W_1(x)W_1(x+\og_1)=-\frac{\re^{\eta_1\og_1}}{\sg^2(\og_1)},
\end{align}
and subsequently
\begin{align}\label{WPb}
\left(\wp(x)-e_1\right)\left(\wp(x+\og_1)-e_1\right)=\frac{\re^{2\eta_1\og_1}}{\sg^4(\og_1)}=(e_1-e_2)(e_1-e_3)=\frac{1}{2}\wp''(\og_1),
\end{align}
\ese
We also introduce the corresponding parameters
\begin{align}\label{kk}
k=\zeta(\kp+\og_1)-\zeta(\kp)-\eta_1=-\frac{W_2(\kp)W_3(\kp)}{W_1(\kp)} \quad \hbox{and} \quad K=\wp(\kp)-e_1,
\end{align}
for which we note that the relations \eqref{WPb}, together with \eqref{WRelb} and \eqref{kk} lead to the elliptic curve in the rational form \eqref{Curve}, i.e.
\begin{align*}
k^2=K+3e+\frac{g}{K},
\end{align*}
where $e\doteq e_1=\wp(\og_1)$ and $g\doteq(e_1-e_2)(e_1-e_3)$.
For convenience below, we suppress the suffix $1$ from the corresponding functions when we single out $\og_1$,
i.e. we set $\og\doteq\og_1$, $\eta\doteq\eta_1=\zeta(\og_1)$ and $W(x)\doteq W_1(x)=\Phi_{\og_1}(x)\re^{-\eta_1x}$.

The aim is to express the elliptic Cauchy matrix $\Omega(\kp_i,\kp_j)$ in terms of the $W$-function
which, in turn, allows us to apply the famous Frobenius determinant formula \cite{Fro82} for elliptic Cauchy matrices.
In terms of the function $W$ we have the following key relation:
\begin{align}\label{Key}
\frac{W(\kp+\kp')}{W(\kp)W(\kp')}=-\frac{k-k'}{K-K'} \quad \hbox{or equivalently} \quad
\frac{W(\kp-\kp')}{W(\kp)W(\kp')}=\frac{k+k'}{K-K'},
\end{align}
and furthermore we can identify the Cauchy kernel of the elliptic KP system as follows:
\begin{align}\label{EllCauchy}
\Og(\kp,\kp')=\frac{K-K'}{k+k'}=-\frac{W(\kp)W(\kp')}{\sqrt{g}}W(\kp-\kp'+\og).
\end{align}
To compute the determinants and Pfaffians, we note that since the function $W$ is essentially a Lam\'e function $\Phi_\og$,
we can apply the Frobenius determinant formula for elliptic Cauchy matrices:
\begin{align}\label{Frob}
&\det\left(\Phi_\alpha(\kp_i-\kp'_j)\right)_{i,j=1,\cdots,N} \nonumber \\
&=\frac{\sg\left(\alpha+\sum_i(\kp_i-\kp'_i)\right)}{\sg(\alpha)}
\frac{\displaystyle\prod_{1\leq i<j\leq N}\sg(\kp_i-\kp_j)\sg(\kp'_j-\kp'_i)}{\displaystyle\prod_{1\leq i,j\leq N}\sg(\kp_i-\kp'_j)},
\end{align}
cf. \cite{Fro82}. Setting $\alpha=\og$ in \eqref{Frob} we get the determinant formula that is relevant for our case, which reads
\begin{align}\label{WFrob}
&\det\left(W\left(\kp_{i_{\nu}}-\kp_{i_{\nu'}}+\og\right)\right)_{\nu,\nu'=1,\cdots,m} \nonumber \\
&=\frac{\sg((m+1)\og)}{\sg^{m+1}(\og)}\,e^{-m\eta\og}\,
\frac{\displaystyle\prod_{1\leq\nu<\nu'\leq m}\sg(\kp_{i_{\nu}}-\kp_{i_{\nu'}})\sg(\kp_{i_{\nu'}}-\kp_{i_{\nu}})}
{\displaystyle\prod_{\substack{1\leq\nu,\nu'\leq m \\ \nu\neq\nu'}}\sg(\kp_{i_{\nu}}-\kp_{i_{\nu'}}+\og)},
\end{align}
which vanishes if $m$ is odd in accordance with the Pfaffian structure.
Subsequently, \eqref{EllCauchy} tells us that the elliptic Cauchy matrix $\left(\Omega(\kp_{i_{\nu}},\kp_{i_{\nu'}})\right)_{\nu,\nu'=1,\cdots,m}$
is essentially $\left(W\left(\kp_{i_{\nu}}-\kp_{i_{\nu'}}+\og\right)\right)_{\nu,\nu'=1,\cdots,m}$ up to diagonal factors
$-\frac{W(\kp_{i_\nu}+\og)W(\kp_{i_{\nu'}})}{\sqrt{g}}$ which can be converted to factors $K_{i_\nu}$ using \eqref{WRel} and \eqref{WP}.
Thus, we end up with
\begin{align}\label{OgFrob}
&\det\left(\Omega(\kp_{i_{\nu}},\kp_{i_{\nu'}})\right)_{\nu,\nu'=1,\cdots,m} \nonumber \\
&=(-1)^{m(m+1)/2}\frac{\sg((m+1)\og)}{\sg^{(m+1)^2}(\og)}\left(\prod_{\nu=1}^m\frac{K_{i_\nu}}{g}\right)
\prod_{1\leq\nu<\nu'\leq m}\frac{1}{W^2(\kp_{i_{\nu}}-\kp_{i_{\nu'}})}.
\end{align}
Furthermore, the prefactor satisfies
\begin{align*}
\frac{\sg((m+1)\og)}{\sg^{(m+1)^2}(\og)}=
\left\{
\begin{array}{ll}
0, & \hbox{if $m$ odd}, \\
(-1)^{m/2}\gamma_m^2, & \hbox{if $m$ even},
\end{array}
\right.
\end{align*}
where $\gamma_m= g^{m(m+2)/8}$.
This can be proven by induction using the quasi-periodicity of the $\sigma$-function and the relations \eqref{WP}, leading to $\gamma_{2n+2}/\gamma_{2n}=g^{n+1}$.
Therefore, the Pfaffian of the skew-symmetric elliptic Cauchy kernel takes the form
\begin{align}\label{CauchyPfaff}
\pf\left(\Omega(\kp_{i_{\nu}},\kp_{i_{\nu'}})\right)_{\nu,\nu'=1,\cdots,m}
=\frac{g^{m(m-2)/8}}{\left(\displaystyle\prod_{\nu=1}^m\,K_{i_\nu}\right)^{(m-2)/2}}\prod_{1\leq\nu<\nu'\leq m}\frac{K_{i_\nu}-K_{i_{\nu'}}}{k_{i_{\nu}}+k_{i_{\nu'}}},
\end{align}
for $m$ even, while for $m$ odd the Pfaffian vanishes.
In \eqref{CauchyPfaff} we have used \eqref{Key} and \eqref{WRela} to express the fully skew-symmetric product in rational form.

\begin{remark}
As a curiosity we note that, as a consequence of the Frobenius--Stickelberger determinant formula, i.e. the elliptic van der Monde-determinant (see \cite{FS77}),
the prefactor can be written as a Hankel determinant in the following form
\begin{align*}
\frac{\sg((m+1)x)}{\sg^{(m+1)^2}(x)}=\frac{(-1)^{m^2}}{(1!2!\cdots m!)^2}
\left|
\begin{array}{ccccc}
\wp'(x) & \wp''(x) & \cdots & \wp^{(m)}(x) \\
\wp''(x) & \wp'''(x) & \cdots & \wp^{(m+1)}(x) \\
\vdots & \vdots & & \vdots \\
\wp^{(m)}(x) & \wp^{(m+1)}(x)& \cdots & \wp^{(2m-1)}(x)
\end{array}
\right|,
\end{align*}
see Example 20.21 in \cite{WW21} and references therein, which vanishes at $x=\omega$ when $m$ is odd while for $m=2n$ even yields (up to a sign) a perfect square, namely
\begin{align}\label{gamma}
\frac{\sg((2n+1)\og)}{\sg^{(2n+1)^2}(\og)}=(-1)^n\gamma_m^2,
\end{align}
with
\begin{align*}
\gamma_m\doteq\frac{1}{1!2!\cdots(2n)!}
\left|
\begin{array}{ccccc}
\wp^{(2)}(\og) & \wp^{(4)}(\og) & \cdots & \wp^{(2n)}(\og) \\
\wp^{(4)}(\og) & \wp^{(6)}(\og) & \cdots & \wp^{(2n+2)}(\og) \\
\vdots & \vdots & & \vdots \\
\wp^{(2n)}(\og) & \wp^{(m+1)}(\og)& \cdots & \wp^{(4n-2)}(\og)
\end{array}
\right|,
\end{align*}
which miraculously turns out to be a pure power of the modulus $g$ alone, even though the individual entries depend on $g$ and $e$.
\end{remark}

\end{appendix}

\renewcommand{\bibname}{References}
\bibliography{References}
\bibliographystyle{plain}

\end{document}